\newtheorem{theorem}{Theorem}
\newtheorem{proposition}[theorem]{Proposition}
\newtheorem{corollary}[theorem]{Corollary}
\theoremstyle{definition}
\theoremstyle{remark}
\newtheorem*{remark*}{Remark}
\newtheorem{remark}[theorem]{Remark}
\def\XXint#1#2#3{{
\setbox0=\hbox{$#1{#2#3}{\int}$}
\vcenter{\hbox{$#2#3$}}\kern-.5\wd0}}
\def \CSmooth(#1,#2){\mathcal{C}_{#1,#2}}
\def \Mgale(#1,#2){M_{#1}^{#2}}
\def \Ngale(#1,#2){\mathcal{N}_{#1}^{#2}}
\renewcommand{\theequation}{\thesection.\arabic{equation}}
\numberwithin{equation}{section}
\date{\today}
\renewcommand{\theequation}{\thesection.\arabic{equation}}
\numberwithin{equation}{section}
\title[Self-similar Laplacian with singularly continuous spectrum]{Singularly continuous spectrum of a self-similar Laplacian on the half-line}
\author{Joe P. Chen}
\address{Department of Mathematics, University of Connecticut, Storrs, CT 06269, USA}
\email{joe.p.chen@uconn.edu}
\urladdr{\url{http://homepages.uconn.edu/jpchen}}
\author{Alexander Teplyaev}
\email{teplyaev@math.uconn.edu}
\urladdr{\url{http://homepages.uconn.edu/teplyaev/}}
\thanks{Research partially supported by NSF DMS-1262929.}
\date{\today}
\keywords{Laplacian, spectral decimation, singularly continuous spectrum, Julia set, fractals.}
\subjclass[2010]{47A10, 37F50, 81Q12, 81Q35}
\begin{document}
\renewcommand{\theequation}{\thesection.\arabic{equation}}
\numberwithin{equation}{section}

\begin{abstract}
We investigate the spectrum of the self-similar Laplacian, which generates the so-called ``$pq$ random walk'' on the integer half-line $\mathbb{Z}_+$. Using the method of spectral decimation, we prove that the spectral type of the Laplacian is singularly continuous whenever $p\neq \frac{1}{2}$. This serves as a toy model for generating singularly continuous spectrum, which can be generalized to more complicated settings.
We hope it will provide more insight into Fibonacci-type and other weakly self-similar models.
\end{abstract}

\maketitle

\section{Introduction}

Several models of one-dimensional discrete Schr\"odinger operators have been proved to exhibit purely singular continuous spectrum; see for instance \cites{Simon95,JL99,AvilaDamanik,Quint}.
In this brief paper, we consider a particular family of self-similar Laplacians $\Delta_p$ on the integer half-line $\mathbb{Z}_+$, parametrized by $p\in (0,1)$. 
The parameter $p$ plays the role of the transition probability of a symmetrizable random walk. From the physical point of view, changing $p$ corresponds to changing the contrast ratio of the fractal media. From the mathematical point of view, these Laplacians arise from the study of the unit interval endowed with a fractal measure, and was first addressed by the second author in \cite{TeplyaevSpectralZeta} in the context of spectral zeta functions; see also the related work \cite{DGV12}. 
In this context, the parameter $p$ determines the resistance and measure scaling of the fractal space. 
In particular, we obtain a simple one-parameter family of models for which the \emph{spectral dimension} $d_s$ of $\Delta_p$ (see Remark~\ref{re-ds}) 
varies continuously:  
\begin{equation}\label{e-ds}
	d_s=\dfrac{\log9}{\log\big(1{+}\tfrac2{p(1-p)}\big)}\in (0,1].
\end{equation}
It will be explained that when $p=\frac12$, we recover the classical one-dimensional Laplacian with $d_s=1$. We can also take the direct product of any number of these fractal intervals to construct a fractal of a higher dimension. For instance, a fractal with topological dimension 4 and spectral dimension 2 can be obtained by taking the direct product of 4 one-dimensional intervals, each equipped with a fractal Laplacian $\Delta_p$ with $p(1-p)=\frac1{40}$, or equivalently, $d_s=\frac12$.

The key question addressed in this paper concerns the spectral type of the fractal Laplacian $\Delta_p$. Related questions about wave propagation on this fractal, \emph{viz.} the modes of convergence of discrete wave solutions to the fractal wave solution, were studied in \cites{CNT14, ABCMT}.
For recent physics results, theoretical and experimental, see \cites{A,a1,a2,Dunne,LBFA,adt} and references therein. 
In general, weakly self-similar fractal systems are related to quasicrystals. Although we do not discuss this relation, 
the reader can find   explanations  in \cites{b1,BIST,dl1,dl2,LBFA} and references therein. 
Our long-term motivation comes from the fact that many problems  of fractal nature appear in quantum gravity 
(\emph{e.g.\@} \cites{b2,e,l,r,Lbook}), but the related mathematical physics  
(\emph{e.g.\@} \cite{MST,S1,st,FKS,FST,CMT,HKT,r1,r2,i1,i2,i3,OST}) is not sufficiently developed to approach these problems, mainly because it is hard to tackle problems of fractal geometry and spectral analysis simultaneously. Hence analyzing a straightforward fractal model, such as the one described in this paper, may be of special interest.

Our result is relatively simple because of the minimality of our model, but it relies upon spectral decimation (or spectral similarity) \cites{eigenpaper,MT03} and its connection with the Julia set of a rational function. Parallel ideas have also appeared in the proofs of singularly continuous spectrum for Fibonacci Hamiltonians on $\mathbb{Z}$ (see \cites{d1,d2,d2-,d3,d4,d5,Mei} and references therein), as well as the relation between Julia sets and Jacobi matrices (see 
\cites{BGH1,BGH2,BGM1,BGM2}). 
One of most recent articles on this topics, 
emphasizing the relation between self-similarity and 
singularly continuous spectrum, is \cite{GLN1,GLN}.
We hope that the methods outlined in this paper can be generalized to more complicated settings.

\subsection*{Acknowledgements}
A substantial part of this work was completed and presented at the workshop ``Spectral Properties of Quasicrystals via Analysis, Dynamics, and Geometric Measure Theory''  at the Casa Matem\'{a}tica Oaxaca (CMO).
We thank the Banff International Research Station for Mathematical Innovation and Discovery, and the organizers and participants  for their support.
We are especially grateful to D. Damanik and A. Gorodetski
for many insightful remarks and suggestions.

\section{Main Results}\label{sec:m}

The $pq$-model on $\mathbb{Z}_+$ is defined as follows. Let $p\in (0,1)$ and $q=1-p$. For each $x\in \mathbb{Z}_+ \setminus \{0\}$, let $m(x)$ be the largest natural number $m$ such that $3^m$ divides $x$. Then for all functions $f$ on $\mathbb{Z}_+$, we set
\begin{align}\label{Lp}
	(\Delta_p f)(x) = \left\{\begin{array}{ll} f(0)-f(1), & \text{if}~x=0 \\
		f(x)-qf(x-1)-pf(x+1), &\text{if}~3^{-m(x)}x \equiv 1~\pmod 3 \\
		f(x) - pf(x-1)-qf(x+1), &\text{if}~3^{-m(x)}x \equiv 2~\pmod 3
	\end{array}\right..
\end{align}
Observe that the Laplacian $\Delta_p$ generates a nearest neighbor ``$pq$ random walk'' on $\mathbb{Z}_+$ as shown in Figure \ref{figRW}.

\begin{figure}[htp]
	\begin{center}
	\includegraphics{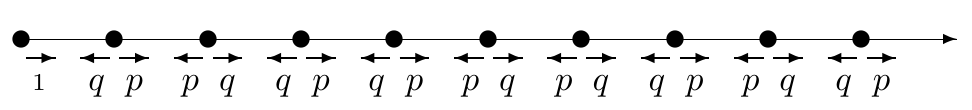}
	\end{center}
	\caption{Transition probabilities in the $pq$ random walk. Here $p\in (0,1)$ and $q=1-p$.}
	\label{figRW}
\end{figure} If $p=\frac{1}{2}$, we recover the symmetric simple random walk on the half-line with reflection at the origin, and $\Delta_{\frac{1}{2}}$, the classical one-dimensional Laplacian, is self-adjoint on $\ell^2(\mathbb{Z}_+)$. If $p\neq \frac{1}{2}$, $\Delta_p$ is not self-adjoint on $\ell^2(\mathbb{Z}_+)$. That said, we can identify the symmetrizing measure for $\Delta_p$ using the theory of Markov chains (see \emph{e.g.\@} \cite{DurrettEOSP}*{Ch.\@ 1}). One can readily verify that $\Delta_p$ generates an irreducible Markov chain on $\mathbb{Z}_+$ whose transition probabilities satisfy $p(x,y)=0$ whenever $|x-y| >1$, \emph{i.e.,} it is a birth-and-death chain. 
As a result, one can explicitly compute the invariant measure $\pi$ by iteratively solving the equation $\pi(y) = \sum_{x\in \mathbb{Z}_+} \pi(x) p(x,y)$. 
This means that the reversibility condition $\pi(x) p(x,y)=\pi(y)p(y,x)$ holds for every $x,y \in \mathbb{Z}_+$, which implies that $\Delta_p$ is symmetric with respect to $\pi$. 
In our example, $\pi$ essentially coincides with a multiple of the
discretization  of the
fractal measure described in \cites{TeplyaevSpectralZeta, ABCMT} ($\pi$ is a $\sigma$-finite but not finite  measure on $\mathbb{Z}_+$). Moreover, 
 we have  the relation $$\pi(x)=\pi(3x),$$ 
 which is easy to verify by induction for all $x\in\mathbb{Z}_+$. This last property allows us to transfer our definitions and 
 main results from $\ell^2(\mathbb{Z}_+)$ to $L^2(\mathbb{Z}_+,\pi)$ in what follows.

Our main result is
\begin{theorem}
\label{thm:scspec}
If $p\neq \frac{1}{2}$, the Laplacian $\Delta_p$, regarded as an operator on either $\ell^2(\mathbb{Z}_+)$ or $L^2(\mathbb{Z}_+,\pi)$, has purely singularly continuous spectrum.
The spectrum is the Julia set of the polynomial $R(z)$ in (\ref{eq:specdecDeltap}), which is a topological Cantor set of Lebesgue measure zero.
\end{theorem}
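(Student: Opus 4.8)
The plan is to prove this via spectral decimation, establishing a conjugacy between the spectrum of $\Delta_p$ and the Julia set of a rational map, then invoking complex-dynamical properties of that Julia set.

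The plan is to run the spectral decimation machinery of \cites{eigenpaper,MT03}, convert the spectral problem into the complex dynamics of a cubic polynomial, and then read off the spectral type from the geometry of its Julia set. First I would derive the decimation identity. Writing the eigenvalue equation $\Delta_p f = zf$ at the two ``intermediate'' vertices $3k+1$ and $3k+2$ (of type $1$ and type $2$ respectively), I solve the resulting $2\times 2$ linear system for $f(3k+1),f(3k+2)$ in terms of the coarse values $f(3k),f(3k+3)$; the system is invertible precisely when $(1-z)^2-p^2\neq 0$, which determines a finite exceptional set $E$. Substituting back into the equation at the coarse vertex $3k$ and using the self-similarity $\pi(x)=\pi(3x)$ together with the fact that the type of $3k$ is governed by $k \bmod 3$, I obtain that $g:=f|_{3\mathbb{Z}_+}$ (after the identification $3\mathbb{Z}_+\cong\mathbb{Z}_+$) solves $\Delta_p g = R(z)\,g$, where
\begin{equation}
R(z)=1-\frac{(1-z)\big[(1-z)^2-1+p(1-p)\big]}{p(1-p)}
\end{equation}
is the cubic decimation polynomial of (\ref{eq:specdecDeltap}). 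Two checks support this: $z=0$ is a fixed point with multiplier $R'(0)=1+\tfrac{2}{p(1-p)}$, matching (\ref{e-ds}) via $d_s=\log 9/\log R'(0)$; and at $p=\tfrac12$ the reflection $z\mapsto 1-z$ conjugates $R$ to the Chebyshev polynomial $T_3$, whose Julia set is $[0,2]$, recovering the classical Laplacian.

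Next I would identify the spectrum with the Julia set. The decimation identity shows that $z\in\sigma(\Delta_p)\setminus E$ if and only if $R(z)\in\sigma(\Delta_p)$, so the spectrum off the exceptional set is backward-invariant under $R$; the spectral-similarity theory then gives $\sigma(\Delta_p)=\mathcal{J}(R)\cup D$, where $D$ is the closure of the backward orbits $\bigcup_{n\ge 0}R^{-n}(E_0)$ of the finitely many ``forbidden'' values $E_0$. The point of the half-line geometry is that $\mathbb{Z}_+$ is a path: it carries no compactly supported (Dirichlet) eigenfunctions, so $E_0$ contributes no genuine spectrum and $D\subset\mathcal{J}(R)$. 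Hence $\sigma(\Delta_p)=\mathcal{J}(R)$.

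I would then analyze the dynamics of $R$. Its two critical points solve $3(1-z)^2=1-p(1-p)$, and for $p=\tfrac12$ their orbits land on the endpoints of $[0,2]$, keeping $\mathcal{J}(R)$ connected. For $p\neq\tfrac12$ one has $p(1-p)<\tfrac14$ and $R'(0)>9$, so the central hump of $R$ overshoots the invariant interval and both critical orbits escape to $\infty$; by the standard dichotomy for polynomials this makes $\mathcal{J}(R)$ totally disconnected, i.e.\ a Cantor set, and $R$ is hyperbolic on $\mathcal{J}(R)$, so the Julia set has Lebesgue measure zero. Concretely, the rigorous route is to exhibit an interval $I\supset\sigma(\Delta_p)$ whose $R$-preimages inside $I$ are disjoint subintervals on each of which $|R'|\ge\kappa>1$; this simultaneously yields the Cantor structure and $|\mathcal{J}(R)|=0$ via the usual geometric-decay estimate on nested preimages.

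Finally I would read off the spectral type. Since $\sigma_{\mathrm{ac}}$ must be supported on a set of positive Lebesgue measure, $|\mathcal{J}(R)|=0$ forces $\sigma_{\mathrm{ac}}(\Delta_p)=\varnothing$. For the point spectrum, an eigenvalue would require an $\ell^2$ (resp.\ $L^2(\pi)$) solution of $\Delta_p f=zf$; but for $z\in\mathcal{J}(R)$ the decimation recursion keeps the transfer behavior on a bounded, non-attracting orbit, so the solutions fail to be square-summable, and no localized eigenfunctions exist as noted above, giving $\sigma_{\mathrm{pp}}(\Delta_p)=\varnothing$. Therefore the spectrum is purely singularly continuous. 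I expect the delicate step to be precisely this exclusion of eigenvalues: one must rule out $\ell^2$ solutions both at Julia-set energies and at the accumulation of the backward-orbit set $D$, which requires quantitative control of the decimated solutions rather than the soft measure-theoretic argument that suffices for the absence of absolutely continuous spectrum.
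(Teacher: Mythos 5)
Your overall strategy is the same as the paper's: derive the decimation polynomial $R$ (your formula $1-\frac{(1-z)[(1-z)^2-1+pq]}{pq}$ does expand to the $R$ of \eqref{eq:specdecDeltap}), identify $\sigma(\Delta_p)$ with $\mathcal{J}(R)$, use the measure-zero Cantor structure to kill the absolutely continuous part, and use the decimation recursion to kill the point spectrum. However, two steps as you describe them would not go through.

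First, your ``concrete rigorous route'' to $|\mathcal{J}(R)|=0$ --- exhibiting an interval whose $R$-preimage branches each satisfy $|R'|\ge\kappa>1$ --- fails for $p$ near $\tfrac12$. Writing $R(z)=\tfrac{z(z-1)(z-2)}{pq}+z$, the preimage of $[0,2]$ is $[0,p]\cup[q,1+p]\cup[1+q,2]$ (for $p<\tfrac12$), with $R(p)=R(q)=2$; but $R'(p)=\tfrac{(1-2p)(2-p)}{pq}\to 0$ as $p\to\tfrac12$, and $p\in\mathcal{J}(R)$ since $2$ is a repelling fixed point and $\mathcal{J}$ is fully invariant. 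So $R$ itself is not uniformly expanding on the branches; you need expansion of an iterate (equivalently, hyperbolicity, which holds because both critical orbits escape to $\infty$). The paper sidesteps this by citing the standard complex-dynamics facts (Proposition \ref{prop:Julia}), which is legitimate, but your stated shortcut is false as written.

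Second, the exclusion of eigenvalues --- which you correctly flag as the delicate step --- is left as a gesture (``bounded, non-attracting orbit, so the solutions fail to be square-summable''), and you anticipate needing ``quantitative control.'' In fact the paper's argument is soft, and you should carry it out: normalizing $f_z(0)=1$ gives $f_z(1)=1-z$, and the eigenfunction map of Proposition \ref{prop:iterate} yields $f_z(3^n)=1-R^{\circ n}(z)$. If the orbit of $z$ eventually hits a fixed point of $R$, one computes the formal eigenfunctions $f_0,f_1,f_2$ explicitly and sees none is square-summable, and Corollary \ref{cor:sss} propagates this to all preimages; otherwise $R^{\circ n}(z)$ does not converge, so $f_z(3^n)\not\to 0$ and $f_z\notin\ell^2$ (and likewise in $L^2(\pi)$ since $\pi(3^n)$ is constant in $n$). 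Relatedly, your treatment of the exceptional set is loose: here $1\pm p$ \emph{do} belong to $\sigma(\Delta_p)$ (they lie in $\sigma(Q)$), and the reason they cause no extra spectrum or eigenvalues is the concrete identity $R(1-p)=2$, $R(1+p)=0$, which places them in $\mathcal{J}(R)$ and upgrades the conditional invariance to full invariance of $\sigma(\Delta_p)$ under $R$; the ``no compactly supported eigenfunctions on a path'' heuristic points in the right direction but does not by itself establish $D\subset\mathcal{J}(R)$.
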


It is well-known that the spectrum of $\Delta_{\frac12}$, the classical one-dimensional Laplacian on $\mathbb{Z}_+$, is the interval $[0,2]$, and is absolutely continuous. So in a sense, there is a ``phase transition'' in the spectral type of $\Delta_p$ as $p$ varies through $\frac12$, going from a singular spectrum to an absolutely continuous spectrum and back to a singular spectrum.

\begin{remark}
We note that, following now standard techniques for the so-called Sturmian potentials (see \cites{BIST,d0,dl1,dl2}),
one can hope to extend this result
to two-sided models on $\mathbb Z$.
However,
there is a technical difficulty in the fact that
 that the
density of the symmetrizing measure
$ \pi $
on $ \mathbb{Z}_+ $
with respect to the counting measure
is not bounded from above and below.
\end{remark}

\begin{remark}
	The disconnectedness of the Julia set $\mathcal{J}(R)$ implies that the Laplacian spectrum has infinitely many (large) gaps, which is a salient feature of many symmetric finitely ramified fractals \cite{HSTZ12}. As a result, the summability of Fourier series is better on these fractals than that on Euclidean space \cite{StrichartzFourierGaps}.
\end{remark}

\begin{remark}\label{re-ds}
The classical notion of spectral dimension $d_s$ is introduced in  
\cites{AO} for discrete Laplacians on infinite graphs, and in \cites{KL,Lbook} for the corresponding continuous Laplacians 
on compact Dirichlet metric spaces. We note that not all authors agree with this notion of $d_s$; see \cite{StrichartzFSFs} for a detailed discussion. 

In the context of our paper, the spectral dimension $d_s$ is understood as follows. Take the sequence of Laplacians $\Delta_p$ restricted 
to the segment $[0,3^m] \cap \mathbb{Z}_+$. One can estimate the lowest non-zero eigenvalue of $\Delta_p$
by the inverse composition powers $R^{-\circ m}(2)$, which behave, up to a constant, as $\big(R'(0)\big)^{-m}$. 
Here $R(z)$ is the spectral decimation function given in \eqref{eq:specdecDeltap}, and $R'(0)=1+\frac2{pq}$. The spectral dimension is then given by $d_s = \frac{2\log M}{\log R'(0)}$, where $M$ stands for the rate of volume growth between successive fractal approximations. In our case $M=3$, so we recover \eqref{e-ds}.
This method of calculating the spectral dimension of a self-similar  Laplacian
which admits spectral decimation is discussed in 
\cites{d1,FukushimaShima,t,TeplyaevSpectralZeta}. 

Alternatively, according to the approach of Kigami and Lapidus \cites{KL}, 
under certain assumptions that are satisfied in our case, 
the spectral dimension of a self-similar set with resistance scaling factors 
$r_j$ and measure scaling factors $m_j$ is defined as the unique number $d_s$ that satisfies 
$$\sum_{j=1}^N(r_jm_j)^{d_s/2}=1.$$
For our fractal Laplacian $\Delta_p$, the resistance scaling factors are
$
r_1=r_3=\frac{p}{1+p}$ and $
r_2=\frac{q}{1+p}, 
$
and 
measure weights are 
$ 
m_1=m_3=\frac{q}{1+q}$ and $ 
m_2=\frac{p}{1+q} 
$ (for more details see \cite{TeplyaevSpectralZeta}). 
From these it is direct to verify that $d_s$ agrees with \eqref{e-ds}. 
A more recent work \cite{ABCMT} also discusses the probabilistic meaning of this spectral dimension in terms of heat kernel estimates, but it is not needed for the present paper. 
\end{remark}

\section{Proof of Theorem \ref{thm:scspec}}

Throughout the section, $\rho(A)$ and $\sigma(A)$ stands for the resolvent set and the spectrum of an operator $A$, respectively.

\subsection{Spectral decimation}

We briefly review the necessary ingredients from spectral decimaion that will be used in the proof. Spectral decimation originated from \cites{b1,RammalToulouse}, and was implemented on the Sierpinski gasket in \cites{Shima91, FukushimaShima,t} and on post-critically finite fractals in \cites{Shima96}. Here we follow \cite{MT03}*{Definition 2.1} (see also \cite{eigenpaper} for more information). Let $\mathcal{H}$ and $\mathcal{H}_0$ be Hilbert spaces, and $H$ (resp. $H_0$) be operators on $\mathcal{H}$ (resp. $\mathcal{H}_0$). We say that $H$ is \emph{spectrally similar} to $H_0$ with functions $\varphi_0,\varphi_1: \rho(H) \to\mathbb{C}$ if there exists a (partial) isometry $U: \mathcal{H}_0\to\mathcal{H}$ such that
\begin{align}
U^*(H-z)^{-1} U = \left(\varphi_0(z) H_0 - \varphi_1(z)\right)^{-1}
\end{align}
whenever both sides are defined.

For concreteness, we will specialize spectral similarity to the case where $\mathcal{H}_0$ is a closed subspace of $\mathcal{H}$, and $U^*=:P_0$ is the orthogonal projection from $\mathcal{H}$ to $\mathcal{H}_0$. Let $\mathcal{H}_1$ be the orthogonal complement of $\mathcal{H}_0$ in $\mathcal{H}$, and $P_1 = I-P_0$ be the orthogonal projection from $\mathcal{H}$ to $\mathcal{H}_1$. Define $I_0: \mathcal{H}_0\to\mathcal{H}_0$, $X: \mathcal{H}_0\to\mathcal{H}_1$, $\bar{X}: \mathcal{H}_1\to\mathcal{H}_0$, and $Q: \mathcal{H}_1\to\mathcal{H}_1$ by $I_0 =P_0 H P_0^*$, $X=P_1 H P_0^*$, $\bar{X}=P_0 H P_1^*$, and $Q=P_1 H P_1^*$. In other words, $H$ has the following block structure with respect to the representation $\mathcal{H}= \mathcal{H}_0\oplus \mathcal{H}_1$:
\begin{align}
H = \begin{pmatrix} I_0 & \bar{X} \\ X & Q\end{pmatrix}.
\end{align}

  According to \cite{MT03}*{Corollary 3.4}, without loss of generality, we may assume that $\varphi_0$ and $\varphi_1$ are defined on $\rho(Q)$. Then by \cite{MT03}*{Definition 3.5}, we introduce the \emph{exceptional set} of the spectrally similar operators $H$ and $H_0$ as follows:
\begin{align}
\mathfrak{E}(H,H_0) = \{z\in \mathbb{C}: z\in\sigma(Q)~\text{or}~\varphi_0(z)=0\}.
\end{align}
Let $R(z) = \varphi_1(z)/\varphi_0(z)$ whenever $\varphi_0(z)\neq 0$.

The key result we need is
\begin{proposition}[\cite{MT03}*{Theorem 3.6}]
\label{prop:specdec}
Let $H$ be spectrally similar to $H_0$ on $\mathcal{H}_0$, and $z\notin \mathfrak{E}(H,H_0)$. Then
\begin{enumerate}
\item $R(z)\in \rho(H_0)$ if and only if $z\in \rho(H)$.
\item $R(z)$ is an eigenvalue of $H_0$ if and only if $z$ is an eigenvalue of $H$. Moreover, there is a one-to-one map $f_0 \mapsto f= f_0-(Q-z)^{-1}Xf_0$ from the eigenspace of $H_0$ corresponding to $R(z)$ onto the eigenspace of $H$ corresponding to $z$.
\end{enumerate}
\end{proposition}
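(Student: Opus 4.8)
The plan is to reduce everything to the Schur complement of $H-z$ relative to the decomposition $\mathcal{H}=\mathcal{H}_0\oplus\mathcal{H}_1$. Since $z\notin\mathfrak{E}(H,H_0)$, we have $z\in\rho(Q)$, so $Q-z$ is boundedly invertible on $\mathcal{H}_1$, and also $\varphi_0(z)\neq 0$. Writing $H-z$ in the block form above (with $I_0-z$ and $Q-z$ on the diagonal), the invertibility of the $(1,1)$-block lets me form the Schur complement
$$ S(z) := (I_0-z)-\bar{X}(Q-z)^{-1}X, $$
and the standard block-inversion formula gives $P_0(H-z)^{-1}P_0^* = S(z)^{-1}$ wherever the left side exists. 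Comparing with the spectral similarity hypothesis $P_0(H-z)^{-1}P_0^* = (\varphi_0(z)H_0-\varphi_1(z))^{-1}$, and using the cited reduction that $\varphi_0,\varphi_1$ may be taken on $\rho(Q)$, I extract the operator identity
$$ S(z) = \varphi_0(z)H_0-\varphi_1(z) = \varphi_0(z)\bigl(H_0-R(z)\bigr) \qquad (z\in\rho(Q)), $$
the last equality using $\varphi_0(z)\neq 0$ and $R(z)=\varphi_1(z)/\varphi_0(z)$. This single identity drives both parts.

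For part (1), I would invoke the fact that, with $Q-z$ invertible, $H-z$ is invertible on $\mathcal{H}$ if and only if its Schur complement $S(z)$ is invertible on $\mathcal{H}_0$. Because $\varphi_0(z)\neq 0$, the operator $S(z)=\varphi_0(z)(H_0-R(z))$ is invertible precisely when $H_0-R(z)$ is, i.e. precisely when $R(z)\in\rho(H_0)$. Chaining these equivalences yields $z\in\rho(H)\iff R(z)\in\rho(H_0)$.

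For part (2), I would build the eigenvector correspondence by a direct block computation. Given $f_0$ with $(H_0-R(z))f_0=0$, set $f=f_0-(Q-z)^{-1}Xf_0$ (first summand in $\mathcal{H}_0$, second in $\mathcal{H}_1$). Applying $H-z$ block-wise, the lower component collapses to $Xf_0-(Q-z)(Q-z)^{-1}Xf_0=0$, while the upper component is exactly $S(z)f_0=\varphi_0(z)(H_0-R(z))f_0=0$; hence $(H-z)f=0$. Conversely, if $(H-z)f=0$ with $f=f_0\oplus f_1$, the lower block forces $f_1=-(Q-z)^{-1}Xf_0$, and substituting into the upper block gives $S(z)f_0=0$, whence $(H_0-R(z))f_0=0$. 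Injectivity and surjectivity are then immediate from $P_0f=f_0$: if $f_0\neq 0$ then $f\neq 0$, and if $f_0=0$ then $f_1=0$ and $f=0$, so the two eigenspaces correspond bijectively via the stated formula.

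The hard part will not be the algebra, which is routine Schur-complement bookkeeping, but the operator-theoretic justification of the block-inversion identities in the possibly unbounded, infinite-dimensional setting: one must verify that $(Q-z)^{-1}$ and $S(z)^{-1}$ exist as operators on the correct domains, that the partial isometry $U$ with $U^*=P_0$ genuinely intertwines the resolvents as claimed, and that $S(z)=\varphi_0(z)H_0-\varphi_1(z)$ holds as a genuine operator identity rather than merely on a resolvent set. These are exactly the points secured by the framework of \cite{MT03}, so I would lean on its Corollary~3.4 to place $\varphi_0,\varphi_1$ on $\rho(Q)$ and thereby make the comparison of inverses fully legitimate.
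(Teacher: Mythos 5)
Your proof is correct and takes essentially the same route as the paper and as \cite{MT03}: decompose $\mathcal{H}=\mathcal{H}_0\oplus\mathcal{H}_1$, identify the Schur complement $S(z)=(I_0-z)-\bar{X}(Q-z)^{-1}X$ with $\varphi_0(z)\left(H_0-R(z)\right)$ on $\rho(Q)$, and read off both the resolvent equivalence and the eigenvector bijection from the block structure, which is exactly the computation the paper carries out concretely for $\Delta_p$ in Appendix~\ref{appA}. The only small points to tidy are cosmetic: the Schur complement is formed using invertibility of the $(2,2)$-block $Q-z$ rather than the $(1,1)$-block, and the operator identity $S(z)=\varphi_0(z)H_0-\varphi_1(z)$ must be extended by analyticity from $\rho(H)\cap\rho(Q)$ to all of $\rho(Q)$ before it can be applied at eigenvalues, a step you rightly delegate to \cite{MT03}*{Corollary 3.4}.
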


\subsection{Spectral decimation for $\Delta_p$}

We now apply the above framework to the operator $\Delta_p$ on $\ell^2(\mathbb{Z}_+)$. Here we put $\mathcal{H} = \ell^2(\mathbb{Z}_+)$ and $\mathcal{H}_0 = \ell^2(3\mathbb{Z}_+)$. Then $$U^*: \ell^2(\mathbb{Z}_+) \to \ell^2(3\mathbb{Z}_+)$$ is the orthogonal projection defined by 
$$(U^* f)(3x)=f(3x).
$$
Moreover, following the idea of Bellissard
\cite[page 125]{b1}, we can define a dilation operator
	$D:\ell^2(3\mathbb{Z}_+) \to \ell^2( \mathbb{Z}_+)$,
$$D f(x)=f(3x),$$
and its co-isometric adjoint  
$D^*:\ell^2(\mathbb{Z}_+) \to \ell^2( 3\mathbb{Z}_+)$,
$$(D^* f)(3x)=f(x).
$$
Then we define the operator $\Delta_p^+$ on $\ell^2(3\mathbb{Z}_+)$ to be $$\Delta_p^+ = D^* \Delta_p D.$$
By definition, $\Delta_p^+$ on $\ell^2(3\mathbb{Z}_+)$
is 
isometrically equivalent to $\Delta_p$ on $\ell^2(\mathbb{Z}_+)$. 
Moreover, 
$\Delta_p^+$ on  $L^2(3\mathbb{Z}_+,\pi)$
is 
isometrically equivalent to $\Delta_p$ on  $L^2(\mathbb{Z}_+,\pi)$
because of the relation
$\pi(x)=\pi(3x)$.

\begin{proposition}[Spectral decimation for $\Delta_p$]
\label{prop:specdecDeltap}
The operator $\Delta_p$ on $\ell^2(\mathbb{Z}_+)$ is spectrally similar to $\Delta_p^+$ on $\ell^2(3\mathbb{Z}_+)$ with functions
\begin{align}
\label{eq:specdecDeltap}
\varphi_0(z)=\frac{pq}{(1-z)^2-p^2}, \quad \varphi_1(z) = R(z) \varphi_0(z), \quad R(z) = \frac{z(z^2-3z+(2+pq))}{pq}.
\end{align}
\end{proposition}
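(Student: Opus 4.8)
The plan is to obtain the spectral similarity directly from the Schur complement of $\Delta_p-z$ with respect to the splitting $\mathcal{H}=\mathcal{H}_0\oplus\mathcal{H}_1$, where $\mathcal{H}_0=\ell^2(3\mathbb{Z}_+)$ and $\mathcal{H}_1$ is the closed span of the coordinates at the non-multiples of $3$. Since $U$ is the isometric inclusion $\mathcal{H}_0\hookrightarrow\mathcal{H}$ and $U^*=P_0$, the standard feedback identity gives, on $\rho(Q)$, that $U^*(\Delta_p-z)^{-1}U=\big[(I_0-z)-\bar X(Q-z)^{-1}X\big]^{-1}$ with $I_0,X,\bar X,Q$ the blocks defined above. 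Thus the whole statement reduces to the single operator identity $(I_0-z)-\bar X(Q-z)^{-1}X=\varphi_0(z)\,\Delta_p^+-\varphi_1(z)$, after which Proposition~\ref{prop:specdec} applies with $H_0=\Delta_p^+$.

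First I would compute $Q=P_1\Delta_p P_1^*$. The decisive structural observation is that the interior coordinates fall into disjoint consecutive pairs $\{3k+1,3k+2\}$, and that within $\mathcal{H}_1$ each pair couples only to itself: from \eqref{Lp}, every site $3k+1$ has $m(3k+1)=0$ and residue $1$, and every $3k+2$ has residue $2$, so the $\mathcal{H}_1\to\mathcal{H}_1$ action is block-diagonal with identical $2\times2$ blocks $\left(\begin{smallmatrix}1&-p\\-p&1\end{smallmatrix}\right)$. Hence $Q-z$ is block-diagonal with per-block determinant $\Phi:=(1-z)^2-p^2$, giving $\sigma(Q)=\{1-p,\,1+p\}$ and an explicit inverse. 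This is the source of the denominator of $\varphi_0$, and it already identifies the exceptional set: $\varphi_0=pq/\Phi$ has no zeros (as $p\in(0,1)$), so $\mathfrak{E}(\Delta_p,\Delta_p^+)=\{1-p,1+p\}$.

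Next I would eliminate the interior values. Solving the $2\times2$ system on each pair $\{3k+1,3k+2\}$ expresses $f(3k+1)$ and $f(3k+2)$ in terms of the flanking boundary values $f(3k),f(3k+3)$ (the coupling of each interior site to the adjacent multiple of $3$ always carries weight $q$, uniformly in $k$). Substituting the expressions for $f(3k-1)$ and $f(3k+1)$ into the equation of $\Delta_p$ at $3k$ produces a three-term recurrence in $f(3k-3),f(3k),f(3k+3)$. The key point making the coefficients self-similar is that $m(3k)=m(k)+1$, hence $3^{-m(3k)}(3k)=3^{-m(k)}k$, so $3k$ and $k$ share the same type and the weights $(\alpha,\beta)\in\{(q,p),(p,q)\}$ occurring at $3k$ are exactly those of $\Delta_p^+$ at $3k$. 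Reading off coefficients, the couplings to $f(3k\pm3)$ equal $-\alpha\,pq/\Phi$ and $-\beta\,pq/\Phi$, which coincide with $-\varphi_0\alpha$ and $-\varphi_0\beta$ precisely when $\varphi_0=pq/\Phi$; the diagonal coefficient collapses using $\alpha+\beta=1$ to $(1-z)(\Phi-q)/\Phi$, and equating it with $\varphi_0-\varphi_1=\varphi_0(1-R)$ forces $R=1-(1-z)(\Phi-q)/(pq)$, which, since $\Phi-q=z^2-2z+pq$, expands to the claimed $R(z)=z(z^2-3z+(2+pq))/(pq)$. The boundary site $x=0$ must be handled separately---it has a single neighbor and $(\Delta_p^+f)(0)=f(0)-f(3)$---but the same elimination reproduces the same $\varphi_0$ and $R$.

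The main obstacle is bookkeeping rather than any isolated computation: one must check that the functions $\varphi_0,\varphi_1$ are genuinely independent of $k$ and of the site type. This hinges on two facts that I would verify carefully, namely that the product $pq$ in every interior-to-boundary coupling is type-independent (the type only permutes which neighbor carries weight $q$), so that the weights $\alpha,\beta$ factor out cleanly and reassemble into $\varphi_0\,\Delta_p^+$, and that the diagonal simplification uses $\alpha+\beta=1$. Granting these, the identity $(I_0-z)-\bar X(Q-z)^{-1}X=\varphi_0\,\Delta_p^+-\varphi_1$ is a direct algebraic verification, completing the proof.
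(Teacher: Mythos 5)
Your proposal is correct and follows essentially the same route as the paper's Appendix~\ref{appA}: a Schur complement of $\Delta_p-z$ with respect to the block on $\mathbb{Z}_+\setminus 3\mathbb{Z}_+$, using the block-diagonal structure of $Q-z$ with $2\times 2$ blocks of determinant $(1-z)^2-p^2$, the uniform weight $-q$ on interior-to-multiple-of-$3$ couplings, and the type-preservation $3^{-m(3k)}(3k)=3^{-m(k)}k$ to reassemble the result as $\varphi_0(z)\bigl(\Delta_p^+-R(z)\bigr)$. The algebra you carry out (including $\Phi-q=z^2-2z+pq$ and the separate check at $x=0$) matches the paper's verification, so no gaps.
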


Proposition \ref{prop:specdec} was essentially proved in \cites{TeplyaevSpectralZeta, ABCMT}. It follows from taking the Schur complement of $\Delta_p$ with respect to the block corresponding to projection of functions onto $\mathbb{Z}_+\setminus (3\mathbb{Z}_+)$. For the reader's convenience, we give a self-contained proof in  Appendix \ref{appA}.

Next, we identify the exceptional set of $\Delta_p$ and $\Delta_p^+$. Note that $\varphi_0(z) \neq 0$ for all $z\in \mathbb{C}$. As for the operator $Q:\mathcal{H}_1 \to \mathcal{H}_1$, (\ref{Lp}) yields
\begin{align*}
(Qf)(3x+1) &= f(3x+1) - p f(3x+2) \\
(Qf)(3x+2) &= f(3x+2)-pf(3x+1) 
\end{align*}
for each $x\in \mathbb{Z}_+$. This means that $Q$, as a matrix with respect to the natural basis of delta functions on $\mathbb{Z}_+\setminus 3\mathbb{Z}_+$, is a block diagonal matrix consisting of $2\times 2$ blocks
\begin{align*}
\begin{pmatrix} Q(3x+1, 3x+1) & Q(3x+1, 3x+2) \\ Q(3x+2,3x+1) & Q(3x+2,3x+2)\end{pmatrix} = \begin{pmatrix} 1 & -p \\ -p & 1\end{pmatrix},~ x\in \mathbb{Z}_+.
\end{align*}
From this it is easy to deduce that $\sigma(Q) = \{1+p, 1-p\}$. Thus $\mathfrak{E}(\Delta_p, \Delta_p^+) = \{1+p, 1-p\}$. 

The next result is a direct consequence of Proposition \ref{prop:specdec}.
\begin{proposition}
\label{prop:iterate}
Suppose $z\notin \{1+p, 1-p\}$. Then
\begin{enumerate}
\item $R(z) \in \rho(\Delta_p^+) = \rho(\Delta_p)$ 
if and only if 
$z\in \rho(\Delta_p)$.
\item $R(z)$ is an eigenvalue of $\Delta_p^+$ if and only if $z$ is an eigenvalue of $\Delta_p$. Furthermore, there is an injection
from the eigenspace of $\Delta_p^+$ with eigenvalue $R(z)$ to the eigenspace of $\Delta_p$ with eigenvalue $z$, given by $u^+\mapsto u$, $u(x) =u^+(3x)$.
\end{enumerate}
\end{proposition}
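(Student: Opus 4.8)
The plan is to read off both assertions directly from Proposition~\ref{prop:specdec}, invoking the three ingredients already assembled: the spectral similarity of $\Delta_p$ and $\Delta_p^+$ with functions $\varphi_0,\varphi_1,R$ (Proposition~\ref{prop:specdecDeltap}); the computation $\mathfrak{E}(\Delta_p,\Delta_p^+)=\{1+p,1-p\}$; and the isometric equivalence of $\Delta_p^+$ with $\Delta_p$ noted above. The standing hypothesis $z\notin\{1+p,1-p\}$ is precisely the condition $z\notin\mathfrak{E}(\Delta_p,\Delta_p^+)$ that licenses Proposition~\ref{prop:specdec}, so there is no further admissibility to verify.

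First I would record the equality $\rho(\Delta_p^+)=\rho(\Delta_p)$ built into statement~(1). The dilation $D\colon\ell^2(3\mathbb{Z}_+)\to\ell^2(\mathbb{Z}_+)$ is a surjective isometry, with $D^*D$ and $DD^*$ equal to the identities on the respective spaces, and by definition $\Delta_p^+=D^*\Delta_pD=D^{-1}\Delta_pD$. Hence $\Delta_p^+$ and $\Delta_p$ are unitarily equivalent, so they share resolvent set and spectrum; since $\pi(x)=\pi(3x)$, the same argument applies verbatim on $L^2(\mathbb{Z}_+,\pi)$. Combining this equality with part~(1) of Proposition~\ref{prop:specdec}, which yields $R(z)\in\rho(\Delta_p^+)$ if and only if $z\in\rho(\Delta_p)$, gives statement~(1) in the stated form.

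For statement~(2) I would apply part~(2) of Proposition~\ref{prop:specdec} with $H=\Delta_p$ and $H_0=\Delta_p^+$. This at once delivers the equivalence that $R(z)$ is an eigenvalue of $\Delta_p^+$ if and only if $z$ is an eigenvalue of $\Delta_p$, together with the one-to-one eigenspace map $f_0\mapsto f=f_0-(Q-z)^{-1}Xf_0$; note that $(Q-z)^{-1}$ exists exactly because $z\notin\sigma(Q)=\{1+p,1-p\}$. The remaining work is to turn this abstract map into the concrete prescription in the statement. Since $f_0\in\mathcal{H}_0=\ell^2(3\mathbb{Z}_+)$ while $(Q-z)^{-1}Xf_0\in\mathcal{H}_1$ is supported on $\mathbb{Z}_+\setminus 3\mathbb{Z}_+$, applying $P_0$ shows that $f$ coincides with $f_0$ on the sublattice $3\mathbb{Z}_+$; under the canonical relabeling this is exactly the asserted relation between $u$ and $u^+$, while the values of $u$ off the sublattice are the ones supplied by $-(Q-z)^{-1}Xf_0$. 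Injectivity is inherited directly from the one-to-one property in Proposition~\ref{prop:specdec}.

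I do not expect a substantive obstacle, since the argument is essentially bookkeeping built on two results already in hand. The one point that demands care is this last translation step: one must pin down which relabeling convention is in force (the projection $P_0$, which keeps a value sitting at $3x$, versus the dilation $D$, which moves the value at $x$ to $3x$) so that the index bookkeeping in the formula $u(x)=u^+(3x)$ is made consistent with the statement that $f$ restricted to $3\mathbb{Z}_+$ equals $f_0$. Once that convention is fixed, both parts follow immediately.
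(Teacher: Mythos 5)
Your proposal is correct and takes essentially the same route as the paper, which records Proposition~\ref{prop:iterate} as a direct consequence of Proposition~\ref{prop:specdec} once the exceptional set $\mathfrak{E}(\Delta_p,\Delta_p^+)=\{1+p,1-p\}$ has been computed and the unitary equivalence $\Delta_p^+=D^*\Delta_p D$ (whence $\rho(\Delta_p^+)=\rho(\Delta_p)$) has been noted. The relabeling point you flag is indeed the only delicate step: the abstract eigenspace map fixes the values on the sublattice $3\mathbb{Z}_+$, and the stated formula is obtained by reading those values back through the dilation $D$.
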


Actually we can say more. Due to the self-similarity of the Laplacian $\Delta_p$, $\Delta_p^+$ has the same spectrum as $\Delta_p$, and in fact they are isomorphic as bounded symmetrizable operators. This observation combined with Proposition \ref{prop:iterate} leads to

\begin{corollary}[Spectral self-similarity of $\Delta_p$]
\label{cor:sss}
Suppose $z\notin\{1+p, 1-p\}$. Then
\begin{enumerate}
\item $R(z) \in \rho(\Delta_p)$ if and only if $z\in \rho(\Delta_p)$.
\item $R(z)$ is an eigenvalue of $\Delta_p$ if and only if $z$ is an eigenvalue of $\Delta_p$.
\end{enumerate}
\end{corollary}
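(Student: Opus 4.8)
The plan is to obtain Corollary~\ref{cor:sss} by upgrading Proposition~\ref{prop:iterate}, which already ties the spectral data of $\Delta_p$ to that of the auxiliary operator $\Delta_p^+$, by exploiting the self-similar identification of $\Delta_p^+$ with $\Delta_p$ itself. The only ingredient beyond Proposition~\ref{prop:iterate} is the statement that $\Delta_p^+$ and $\Delta_p$ share the same resolvent set and the same point spectrum; granting this, both assertions follow by substituting $\Delta_p$ for $\Delta_p^+$ in the two equivalences of Proposition~\ref{prop:iterate}.

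First I would establish the unitary equivalence $\Delta_p^+\cong\Delta_p$. Recall $\Delta_p^+=D^*\Delta_p D$, where $D\colon\ell^2(3\mathbb{Z}_+)\to\ell^2(\mathbb{Z}_+)$ is the dilation $Df(x)=f(3x)$. A one-line computation shows $D$ is a bijective isometry: $\|Df\|^2=\sum_{x}|f(3x)|^2=\sum_{y\in 3\mathbb{Z}_+}|f(y)|^2=\|f\|^2$, and every $g\in\ell^2(\mathbb{Z}_+)$ is the image of the function $3x\mapsto g(x)$. Hence $D$ is unitary, $D^*=D^{-1}$, and $\Delta_p^+=D^{-1}\Delta_p D$ is unitarily equivalent to $\Delta_p$. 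In the weighted setting one repeats the computation in $L^2(\mathbb{Z}_+,\pi)$, where the isometry property of $D$ is exactly the self-similar identity $\pi(3x)=\pi(x)$ recorded in Section~\ref{sec:m}. By the invariance of spectral data under similarity, this gives $\rho(\Delta_p^+)=\rho(\Delta_p)$ and $\sigma(\Delta_p^+)=\sigma(\Delta_p)$; moreover $\lambda$ is an eigenvalue of $\Delta_p^+$ if and only if it is an eigenvalue of $\Delta_p$, the eigenvectors being transported by $w\mapsto D^*w$ and its inverse.

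With this in hand the two claims are immediate. For part (1), Proposition~\ref{prop:iterate}(1) provides $R(z)\in\rho(\Delta_p^+)$ if and only if $z\in\rho(\Delta_p)$; replacing $\rho(\Delta_p^+)$ by $\rho(\Delta_p)$ yields $R(z)\in\rho(\Delta_p)$ if and only if $z\in\rho(\Delta_p)$. For part (2), Proposition~\ref{prop:iterate}(2) gives that $R(z)$ is an eigenvalue of $\Delta_p^+$ if and only if $z$ is an eigenvalue of $\Delta_p$; since the point spectra of $\Delta_p^+$ and $\Delta_p$ coincide, $R(z)$ is an eigenvalue of $\Delta_p$ if and only if $z$ is an eigenvalue of $\Delta_p$.

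I expect the only point requiring attention to be the transfer of the argument to the non-self-adjoint regime $p\neq\tfrac12$. Since $\Delta_p$ is merely symmetrizable rather than self-adjoint on $\ell^2(\mathbb{Z}_+)$, I would stress that unitary equivalence, and indeed any bounded similarity, preserves the resolvent set, the spectrum, and the point spectrum together with geometric multiplicities irrespective of self-adjointness, so no spectral-theorem machinery enters; and that the identity $\pi(3x)=\pi(x)$ is precisely what makes $D$ an isometry simultaneously on $\ell^2(\mathbb{Z}_+)$ and on $L^2(\mathbb{Z}_+,\pi)$, so that the corollary holds verbatim in both functional settings named in Theorem~\ref{thm:scspec}.
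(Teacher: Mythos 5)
Your proposal is correct and follows essentially the same route as the paper: the authors justify Corollary~\ref{cor:sss} precisely by noting that $\Delta_p^+=D^*\Delta_p D$ is isometrically equivalent to $\Delta_p$ (on both $\ell^2(\mathbb{Z}_+)$ and, via $\pi(x)=\pi(3x)$, on $L^2(\mathbb{Z}_+,\pi)$), hence shares its resolvent set and point spectrum, and then substitute this into Proposition~\ref{prop:iterate}. You merely spell out the verification that $D$ is unitary, which the paper leaves implicit.
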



It remains to resolve the status of the exceptional points.

\begin{proposition}
\label{prop:exceptional}
$1\pm p\in \sigma(\Delta_p)$.
\end{proposition}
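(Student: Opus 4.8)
The plan is to argue directly, because $1\pm p$ are, by construction, exactly the eigenvalues of $Q$ and hence the exceptional set $\mathfrak{E}(\Delta_p,\Delta_p^+)$, where neither Proposition~\ref{prop:specdec} nor Corollary~\ref{cor:sss} is available. Since $\Delta_p$ is a bounded operator (self-adjoint on $L^2(\mathbb{Z}_+,\pi)$, but I would work on $\ell^2(\mathbb{Z}_+)$ to avoid carrying the weights), it suffices to show that $\Delta_p-\lambda$ is not bounded below for $\lambda\in\{1\pm p\}$, i.e.\@ to exhibit a Weyl sequence: vectors $\psi_n$ with $\|(\Delta_p-\lambda)\psi_n\|/\|\psi_n\|\to 0$. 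This places $\lambda$ in the approximate point spectrum, and hence in $\sigma(\Delta_p)$.

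The building blocks are the eigenvectors of the $2\times 2$ blocks of $Q$: on each pair $\{3x+1,3x+2\}$ the vector $v_+=(1,-1)$ solves $Qv=(1+p)v$ and $v_-=(1,1)$ solves $Qv=(1-p)v$. Let $\phi_x$ denote $v_\pm$ placed on $\{3x+1,3x+2\}$ and extended by zero. Applying $\Delta_p$ to $\phi_x$ reproduces $\lambda\phi_x$ on $\mathbb{Z}_+\setminus 3\mathbb{Z}_+$ but ``leaks'' onto the two adjacent sites $3x,\,3x+3\in 3\mathbb{Z}_+$ through the off-diagonal block $\bar{X}$. I would then set $\psi=\sum_x a_x\phi_x$ and choose the $a_x$ so that at every interior site $3k$ the leakage from $\phi_{k-1}$ cancels that from $\phi_k$. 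This forces a first-order recursion $a_k=\pm (c_1^{(k)}/c_2^{(k)})\,a_{k-1}$, where $c_1^{(k)},c_2^{(k)}$ are the two hopping weights of $\Delta_p$ at $3k$; by \eqref{Lp} their ratio equals $(p/q)^{\pm 1}$ according to the residue $3^{-m(3k)}(3k)\bmod 3$, which is precisely the least significant nonzero base-$3$ digit of $k$.

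The crux, and the main obstacle, is controlling these coefficients, which are generically unbounded. Writing $|a_n|=|a_0|\,(p/q)^{S(n)}$, the exponent $S(n)=\sum_{k\le n}\epsilon_k$ records (with sign) the rule-$2$ versus rule-$3$ sites along $3\mathbb{Z}_+$ and inherits the self-similarity of the model: one checks $\epsilon_{3j}=\epsilon_j$, $\epsilon_{3j+1}=-1$, $\epsilon_{3j+2}=+1$, whence $S(3J+2)=S(J)$ and in particular $S(3^m-1)=0$ for all $m$. I would therefore truncate $\psi$ to the blocks $0\le x\le 3^{m-1}-1$: since $S$ vanishes at both endpoints, the surviving (uncancelled) boundary leakage near the origin and near $3^m$ has size $O(|a_0|)=O(1)$, while $\|\psi\|^2=2\sum_x|a_x|^2\ge 2|a_0|^2\,\#\{x\le 3^{m-1}-1: S(x)=0\}$ diverges, the points $x=3^j-1$ alone contributing $\gtrsim m$ terms. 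Hence the relative error tends to $0$ and $\lambda\in\sigma(\Delta_p)$. By the symmetry $p\leftrightarrow q$ one may assume $p<q$ throughout, so that $p/q<1$.

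Conceptually the statement is natural: a direct computation from \eqref{eq:specdecDeltap} gives $R(1+p)=0$ and $R(1-p)=2$, and $0,2$ are repelling fixed points of $R$ (with $R'(0)=R'(2)=1+\tfrac{2}{pq}>1$), hence lie in the Julia set $\mathcal{J}(R)$; complete invariance of $\mathcal{J}(R)$ then forces $1\pm p\in\mathcal{J}(R)$. Once it is known that $\sigma(\Delta_p)$ coincides with $\mathcal{J}(R)$ away from the two exceptional points and that $\mathcal{J}(R)$ is perfect, the closedness of $\sigma(\Delta_p)$ alone yields $1\pm p\in\sigma(\Delta_p)$. I regard the explicit Weyl sequence above as the self-contained alternative, since it does not presuppose the full identification of the spectrum with $\mathcal{J}(R)$.
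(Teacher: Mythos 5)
Your proof is correct, but it is a genuinely different argument from the one in the paper. The paper disposes of the exceptional points in two lines: it writes $\Delta_p-z$ in block form with respect to $\mathcal{H}_0\oplus\mathcal{H}_1$ and asserts that $\Delta_p-z$ is invertible if and only if both $Q-z$ and the Schur complement $(I_0-z)-\bar{X}(Q-z)^{-1}X$ are invertible; since $1\pm p\in\sigma(Q)$, the conclusion follows. You instead exhibit an explicit Weyl sequence: you resonantly superpose the rank-one kernel vectors of the $2\times 2$ blocks of $Q-(1\pm p)$, tune the amplitudes $a_x$ so that the leakage through $\bar{X}$ telescopes at every interior site of $3\mathbb{Z}_+$, and then exploit the ternary self-similarity of the hopping weights (your identity $S(3^m-1)=0$, which I checked: $\epsilon_{3j}=\epsilon_j$ and the digit-$1$/digit-$2$ contributions cancel in pairs, and the rule at $3k$ agrees with the rule at $k$ because $3^{-m(3k)}(3k)=3^{-m(k)}k$) to keep the two uncancelled boundary terms of size $O(|a_0|)$ while $\|\psi\|^2\gtrsim m|a_0|^2$. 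This places $1\pm p$ in the approximate point spectrum, which is stronger and entirely self-contained. The trade-off is worth noting: the paper's route is shorter, but the direction of its block-invertibility claim that it actually uses --- that invertibility of $\Delta_p-z$ forces invertibility of $Q-z$ --- is not a generic fact about $2\times 2$ operator block matrices (only the converse implication, given invertibility of $Q-z$, is standard), and the paper leaves the verification to the reader; your construction is precisely the quantitative content that substantiates that step in this model, at the cost of having to control the a priori unbounded coefficients $a_x$, which you do correctly. Your closing observation (that $R(1\pm p)\in\{0,2\}\subset\mathcal{J}(R)$, so the statement would also follow from closedness of the spectrum plus density of preimages) is sound as a consistency check, and you are right to flag that it leans on Theorem~\ref{thm:specJulia}-type facts whose proof in the paper uses the present proposition, so the Weyl sequence is the honest self-contained route.
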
 
\begin{proof}
The operator $\Delta_p-z$ has the block structure
\begin{align}
\Delta_p-z = \begin{pmatrix}
I_0-z & \bar{X} \\ X & Q-z \end{pmatrix}
\end{align}
with respect to the representation $\mathcal{H} = \mathcal{H}_0 \oplus \mathcal{H}_1$.
It is direct to verify that $\Delta_p-z$ is invertible if and only if both $Q-z$ and the Schur complement $(I_0-z)-\bar{X}(Q-z)^{-1}X$ are invertible. Since $1\pm p \in \sigma(Q)$ by the computation prior to Proposition \ref{prop:iterate}, we conclude that $1\pm p\in \sigma(\Delta_p)$.
\end{proof}

\begin{remark*}Figure~\ref{RGraph} shows the graph of $R$. From the point of view of dynamics on the Riemann sphere $\hat{\mathbb{C}} := \mathbb{C}\cup \{\infty\}$, the polynomial $R$ has four fixed points, $0$, $1$, $2$, and $\infty$. The first three are repulsive, since $
	|R'(0)|=|R'(2)| = \frac{2+pq}{pq} > 1$ and $|R'(1)| = \left|1-\frac{1}{pq}\right|\geq 3$, while $\infty$ is superattracting.
The spectral decimation function $R$ in (\ref{eq:specdecDeltap}) depend on $pq=p(1-p)$ and is symmetric in $p$ and $q$. So according to Corollary \ref{cor:sss}, the spectrum of $\Delta_{1-p}$, as a compact subset of $\mathbb R$, is equal to the spectrum of $\Delta_{p}$. However, the eigenfunctions of $\Delta_{1-p}$ do not coincide with the eigenfunctions of $\Delta_p$; see \cite{TeplyaevSpectralZeta} for details. 
If we assume for a moment that 
$p\in (0, \frac{1}{2}]$, then 
the preimage of $[0,2]$ under $R$ is
\begin{align*}
[0,~p] \cup [q,~1+p] \cup [1+q,~2].
\end{align*}
If 
$p\in ( \frac{1}{2},1]$, then 
the preimage of $[0,2]$ under $R$ is
\begin{align*}
[0,~q] \cup [p,~1+q] \cup [1+p,~2].
\end{align*}
\begin{figure}[ht]
\begin{center}
\includegraphics{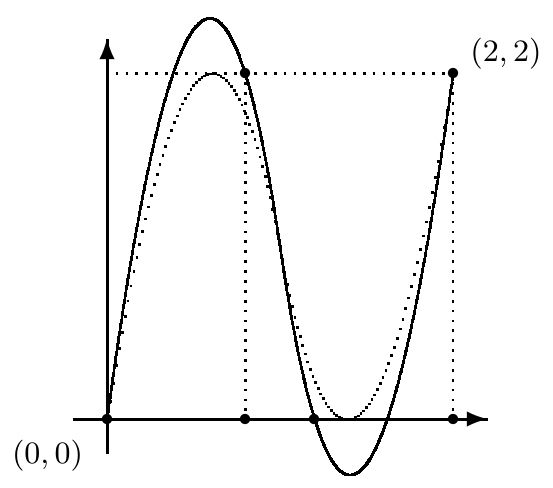}
\end{center}
\caption{The graph of the cubic polynomial $R(z)$
associated with the Laplacian $\Delta_p$.}
\label{RGraph}
\end{figure}In particular, when $p=\frac{1}{2}$, $R(z)$ is the cubic Chebyshev polynomial, and the preimage of $[0,2]$ under $R$ is the entire interval $[0,2]$. The graph of   $R(z)$ in the case $p=\frac12$ is illustrated by the curved dotted line in 
Figure~\ref{RGraph}, and the solid curved line sketches  
the graph of $R(z)$ when $p\neq\frac12$. 
\end{remark*}

We now recall some facts from complex dynamics (see \emph{e.g.\@} \cite{Milnor}*{\S4}). The Fatou set $\mathcal{F}(g)$ of a nonconstant holomorphic function $g$ on the Riemann sphere $\hat{\mathbb{C}}$ is the domain in which the family of iterates $\{g^{\circ n}\}_n$ converges uniformly on compact subsets. The complement of the Fatou set in $\hat{\mathbb{C}}$ is the Julia set $\mathcal{J}(g)$. 
Both $\mathcal{F}(g)$ and $\mathcal{J}(g)$ are fully invariant under $g$: that is, $g^{-1}(\mathcal{F}(g)) = \mathcal{F}(g)$ and $g^{-1}(\mathcal{J}(g)) = \mathcal{J}(g)$. Moreover, $\mathcal{J}(g)$ is a closed subset of $\hat{\mathbb{C}}$.

For the spectral decimation function $R$ in (\ref{eq:specdecDeltap}), we have the following characterization of the Julia set $\mathcal{J}(R)$, which is standard in complex dynamics (see \cite{Milnor}):

\begin{proposition}
\label{prop:Julia}
The Julia set $\mathcal{J}(R)$ of the cubic polynomial map $R$ in (\ref{eq:specdecDeltap}) is contained in $[0,2]$. If $p=\frac{1}{2}$, $\mathcal{J}(R) = [0,2]$. If $p \neq \frac{1}{2}$, $\mathcal{J}(R)$ is a Cantor set of Lebesgue measure zero.
\end{proposition}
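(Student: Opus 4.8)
The plan is to identify $\mathcal{J}(R)$ with the boundary of the basin of the superattracting fixed point $\infty$ and to exploit that the segment $I:=[0,2]$ is backward invariant. Writing $\Omega:=\hat{\mathbb{C}}\setminus I$, I would \emph{first} establish $\mathcal{J}(R)\subseteq I$. The computation of $R^{-1}([0,2])$ in the remark above exhibits, for every $p$, three real intervals whose union is $R^{-1}(I)$; since a generic point of $I$ already has three real preimages accounted for there, the full complex preimage $R^{-1}(I)$ is real and contained in $I$, so $\Omega$ is forward invariant, $R(\Omega)\subseteq\Omega$. Then every iterate $R^{\circ n}$ maps $\Omega$ into $\Omega$ and hence omits on $\Omega$ all values of $I$, in particular $0,1,2$; by Montel's theorem $\{R^{\circ n}|_\Omega\}_n$ is normal, so $\Omega\subseteq\mathcal{F}(R)$ and $\mathcal{J}(R)\subseteq I$. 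Since $\Omega$ is connected, forward invariant, and contains $\infty$, I would further conclude $\Omega\subseteq A(\infty)$, so that the filled Julia set is $K(R)=\bigcap_{n\ge0}R^{-n}(I)$.

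Next I would dispatch the symmetric case $p=\tfrac12$. Conjugating $R$ by the affine map $w=z-1$ turns it into $\tilde R(w)=4w^3-3w=T_3(w)$, the cubic Chebyshev polynomial, as noted in the remark. It is classical that $\mathcal{J}(T_3)=[-1,1]$: via $w=\cos\theta$ the map $T_3$ is semiconjugate to $\theta\mapsto 3\theta$ on the circle, whose repelling periodic points are dense, while every $w\notin[-1,1]$ escapes to $\infty$. Pulling back gives $\mathcal{J}(R)=[0,2]$.

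For $p\neq\tfrac12$ I would establish the Cantor structure through hyperbolicity. Now $pq<\tfrac14$, the finite critical points $c_\pm=1\pm\sqrt{(1-pq)/3}$ lie in $(0,2)$, and the remark shows their critical values satisfy $R(c_-)>2$ and $R(c_+)<0$. Using the factorization $R(z)-z=\dfrac{z(z-1)(z-2)}{pq}$, the orbit of any real $z>2$ is strictly increasing and that of any real $z<0$ strictly decreasing; as there is no fixed point outside $[0,2]$, every real point outside $I$ escapes monotonically to $\pm\infty$. In particular both critical orbits tend to $\infty$, so all finite critical points lie in $A(\infty)$ and $R$ is hyperbolic. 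By the classical theorem of Fatou (see \cite{Milnor}), a polynomial all of whose critical points are attracted to $\infty$ has totally disconnected Julia set, on which it is topologically conjugate to the one-sided full shift on $\deg R=3$ symbols. Since $\mathcal{J}(R)$ is also nonempty, compact, and perfect (Julia sets of maps of degree $\ge2$ have no isolated points), it is a Cantor set; concretely $\mathcal{J}(R)=K(R)=\bigcap_{n\ge0}R^{-n}(I)$ is the attractor of the iterated function system of the three inverse branches of $R|_I$, with $R^{-n}(I)$ a disjoint union of $3^n$ intervals each mapped homeomorphically onto $I$ by $R^{\circ n}$.

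Finally I would prove Lebesgue measure zero, and I expect this to be the main obstacle. Since $R$ is hyperbolic and $\mathcal{J}(R)\neq\hat{\mathbb{C}}$, the general theory gives $\mathrm{area}(\mathcal{J}(R))=0$, which for $\mathcal{J}(R)\subseteq\mathbb{R}$ is exactly the vanishing of its length. The mechanism is visible directly: the total length of $R^{-1}(I)$ equals $4\min(p,q)$, so it contracts by $\theta=2\min(p,q)<1$ at the first level, and because the orbits in $\mathcal{J}(R)$ keep a definite distance from the critical points, the inverse branches of $R^{\circ n}$ have uniformly bounded distortion (Koebe), propagating this contraction to every level and forcing $\bigl|R^{-n}(I)\bigr|\to0$ geometrically. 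The hard part is making this rigorous: verifying the hyperbolicity hypotheses and controlling the distortion of the inverse branches near the repelling endpoints $0$ and $2$, where two branches meet. Invoking the general theory of hyperbolic rational maps is what discharges this step cleanly.
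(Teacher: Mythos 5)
The paper does not actually prove this proposition: it is stated as ``standard in complex dynamics'' with a citation to Milnor's book, and the only ingredient the authors later extract explicitly is that $\{0,1,2\}\subset\mathcal J(R)$ because these are repelling fixed points. Your write-up therefore supplies the argument the paper delegates to the literature, and it is the standard one: backward invariance of $[0,2]$ plus Montel to get $\mathcal J(R)\subseteq[0,2]$; the Chebyshev conjugation $w=z-1$, $\tilde R(w)=4w^3-3w$ for $p=\tfrac12$; and, for $p\neq\tfrac12$, the observation that both finite critical points $c_\pm=1\pm\sqrt{(1-pq)/3}$ have critical values outside $[0,2]$, so by the monotone escape of real orbits outside $[0,2]$ (via $R(z)-z=\frac{z(z-1)(z-2)}{pq}$) all critical points lie in $A(\infty)$, whence $\mathcal J(R)=K(R)$ is a Cantor set by the classical Fatou dichotomy. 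All of these computations check out (in particular $R(p)=R(q)=2$, $R(1+p)=R(1+q)=0$, and $R(2-z)=2-R(z)$), and this part of your proof is complete and correct.

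The one genuine misstep is in the measure-zero paragraph: the deduction ``$\mathrm{area}(\mathcal J(R))=0$, which for $\mathcal J(R)\subseteq\mathbb R$ is exactly the vanishing of its length'' is a non sequitur. Every subset of the real line has planar Lebesgue measure zero, so the two-dimensional statement for hyperbolic Julia sets gives no information about length. What you need is the one-dimensional statement, and the mechanism you sketch afterwards is the right one --- but it should be promoted from a heuristic to the actual proof: since all critical orbits converge to the superattracting point $\infty$, the map is hyperbolic, so inverse branches of $R^{\circ n}$ are defined with uniformly bounded distortion on a neighborhood of $\mathcal J(R)$; if $\mathcal J(R)$ had positive length it would have a Lebesgue density point, and blowing up by these inverse branches (Koebe) would force $\mathcal J(R)$ to contain an interval, contradicting total disconnectedness. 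Equivalently, cite the theory of conformal expanding repellers (Bowen's formula; a hyperbolic Julia set contained in $\mathbb R$ is either an interval or has Hausdorff dimension strictly less than $1$). Note also that the naive version of your contraction argument --- iterating the first-level ratio $2\min(p,q)<1$ --- cannot be run with Euclidean derivatives alone, since $|R'|<1$ at some points of $R^{-1}([0,2])$ when $p$ is close to $\tfrac12$; the bounded-distortion (or adapted-metric) formulation is genuinely needed. With that sentence repaired, your proof is a correct and self-contained substitute for the paper's citation.
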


By \cite{Milnor}*{Lemma 4.6}, $\{0,1,2\}\subset \mathcal{J}(R)$ because they are repulsive fixed points of $R$. 

\begin{theorem}
\label{thm:specJulia}
$\sigma(\Delta_p)=\mathcal{J}(R)$.
\end{theorem}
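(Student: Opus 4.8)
The plan is to establish the two inclusions $\mathcal{J}(R) \subseteq \sigma(\Delta_p)$ and $\sigma(\Delta_p) \subseteq \mathcal{J}(R)$ separately, using the spectral self-similarity encoded in Corollary~\ref{cor:sss} together with the dynamical characterization of the Julia set.

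Let me sketch the structure.

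**Inclusion σ(Δ_p) ⊆ J(R):** I'd prove the contrapositive, showing the Fatou set is contained in the resolvent set. The key is that Δ_p is a bounded operator, so its spectrum is compact and in particular bounded away from infinity. The basin of attraction of the superattracting fixed point ∞ is an open subset of the Fatou set, and every point outside [0,2] eventually escapes to ∞ under iteration of R. More carefully: the filled Julia set (points with bounded forward orbit) has J(R) as its boundary, and by Proposition~\ref{prop:Julia} we know J(R) ⊆ [0,2]. I'd argue that any z in the Fatou set either lies in the basin of ∞ (hence its iterates R^{∘n}(z) → ∞, so eventually R^{∘n}(z) ∈ ρ(Δ_p) since the spectrum is bounded) or lies in the interior of the filled Julia set. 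Then backward application of Corollary~\ref{cor:sss}(1) — which says R(z) ∈ ρ(Δ_p) ⟺ z ∈ ρ(Δ_p) — lets me pull resolvent membership back from R^{∘n}(z) to z. The exceptional points {1±p} need care: these are finitely many points, and since I already know from Proposition~\ref{prop:exceptional} that 1±p ∈ σ(Δ_p), I only need the inclusion for z ∉ {1±p}, and I can check that the orbit avoids the exceptional set or handle it by the fully-invariance of J(R).

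**Inclusion J(R) ⊆ σ(Δ_p):** Here I'd use that J(R) is the closure of the set of repelling periodic points (a standard fact from complex dynamics, Milnor §14), together with the fact that σ(Δ_p) is closed. By Corollary~\ref{cor:sss}(2), the set of eigenvalues and more generally σ(Δ_p) is backward-invariant under R away from the exceptional points. The cleaner route: I already know {0,1,2} ⊆ J(R) ∩ σ(Δ_p) (the three repelling fixed points lie in the spectrum — 0 and 2 as endpoints of the spectrum-generating interval, and we can verify directly, or note 1±p handle nearby behavior). Since J(R) equals the closure of the backward orbit ⋃_n R^{-n}(z_0) of any point z_0 ∈ J(R), and since Corollary~\ref{cor:sss}(1) propagates resolvent-set membership forward and backward, any preimage of a spectral point is again spectral (modulo exceptional points). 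Taking z_0 = 0 or 2 and using that σ(Δ_p) is closed gives J(R) = $\overline{\bigcup_n R^{-n}(z_0)}$ ⊆ σ(Δ_p).

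**Main obstacle:** The delicate part is the bookkeeping around the exceptional set $\mathfrak{E} = \{1\pm p\}$ and, symmetrically, the grand orbit of these points, since Corollary~\ref{cor:sss} is only valid for z ∉ {1±p}. I expect the hard step is verifying that this finite exceptional set does not obstruct either inclusion — concretely, that the countable grand orbit ⋃_n R^{-n}(\{1\pm p\}) is handled by the closedness of σ(Δ_p) on one side and by Proposition~\ref{prop:exceptional} plus full invariance of J(R) on the other. One must confirm that R^{-1}(\{1\pm p\}) consists of points whose spectral status is still determined (either directly or as limits), so that the iteration argument closes up without a gap at these measure-zero exceptional fibers.
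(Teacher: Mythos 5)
Your proposal follows essentially the same route as the paper: backward-orbit density of a repelling fixed point plus closedness of $\sigma(\Delta_p)$ for $\mathcal{J}(R)\subseteq\sigma(\Delta_p)$, and boundedness of the spectrum together with escape to the superattracting point $\infty$ for the reverse inclusion. The one step you flag but leave open --- the bookkeeping at the exceptional set $\{1\pm p\}$ --- is closed in the paper by a single observation: $R(1-p)=2$ and $R(1+p)=0$, and both $0$ and $2$ lie in $\sigma(\Delta_p)$ (verified directly via the bounded formal eigenfunctions $f_0\equiv 1$ and $f_2(x)=(-1)^x$), so by Proposition~\ref{prop:exceptional} both sides of the equivalence in Corollary~\ref{cor:sss}(1) are simultaneously false at $z=1\pm p$; hence the biconditional ``$R(z)\in\rho(\Delta_p)\iff z\in\rho(\Delta_p)$'' holds for \emph{every} $z\in\mathbb{C}$ with no exceptional points at all, and the forward and backward iteration arguments need no further case analysis on the grand orbit of $\{1\pm p\}$. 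With that upgrade in hand, your two inclusions go through exactly as you sketch them (and in the second inclusion you should invoke Proposition~\ref{prop:Julia} to know the Fatou set is precisely the basin of $\infty$, so that a bounded forward orbit forces membership in $\mathcal{J}(R)$).
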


\begin{proof}
We readily verify that $0\in \sigma(\Delta_p)$ (its corresponding formal eigenfunction is $f_0 \equiv 1$) and $2\in \sigma(\Delta_p)$ (eigenfunction is $f_2(x) = (-1)^x,~x\in\mathbb{Z}_+$). Also, $R(1-p)=2$ and $R(1+p)=0$. This combined with Proposition \ref{prop:exceptional} allows us to strengthen Corollary \ref{cor:sss} to the following statement:
\begin{align}
\label{newcor8}
\text{For every $z\in \mathbb{C}$, $R(z) \in \rho(\Delta_p)$ if and only if $z \in \rho(\Delta_p)$.}
\end{align}

Now we show $\mathcal{J}(R) \subset \sigma(\Delta_p)$.
By \eqref{newcor8}, all pre-iterates of $0$ under $R$ lie in $\sigma(\Delta_p)$. Since $\sigma(\Delta_p)$ is closed,
\begin{align}
\label{sigmaJ1}
\overline{\bigcup_{n=0}^\infty R^{\circ -n}(0)} \subset \sigma(\Delta_p).
\end{align}
Meanwhile, $0\in \mathcal{J}(R)$, and by \cite{Milnor}*{Corollary 4.13}, the set of all pre-iterates of a point in the Julia set is everywhere dense in the Julia set. This implies that
\begin{align}
\label{sigmaJ2}
\overline{\bigcup_{n=0}^\infty R^{\circ -n}(0)} = \mathcal{J}(R).
\end{align}
It follows from \eqref{sigmaJ1} and \eqref{sigmaJ2} that $\mathcal{J}(R) \subset \sigma(\Delta_p)$.

Next we show $\sigma(\Delta_p) \subset \mathcal{J}(R)$. Let $z\in \sigma(\Delta_p)$. By \eqref{newcor8}, $R^{\circ n}(z) \in \sigma(\Delta_p)$ for all $n\in \mathbb{Z}_+$. Since $\sigma(\Delta_p)$ is compact (and hence bounded), and the only attracting fixed point of $R$ is $\infty$, it follows that $z$ cannot be in the Fatou set (which contains the basin of attraction of $\infty$, and is thus unbounded in $\mathbb{C}$). So $z\in \mathcal{J}(R)$.
\end{proof}

\begin{remark*}
It is instructive to compare the proof above with the proof of \cite{MT03}*{Theorem 5.8}, which relates the spectrum of the Laplacian on a symmetric self-similar graph to the Julia set of the corresponding spectral decimation function. We summarize the main differences between the two proofs.

In the proof above, we first took advantage of the condition $R(\mathfrak{E})\subset \sigma(\Delta_p)$, and deduced the full invariance of $\sigma(\Delta_p)$ under $R$. To prove $\mathcal{J}(R)\subset \sigma(\Delta_p)$, we identified a point in $\mathcal{J}(R)\cap \sigma(\Delta_p)$, and used the full invariance. To prove $\sigma(\Delta_p) \subset \mathcal{J}(R)$, we used the full invariance, and the fact that $\infty$ is the only attracting fixed point of $R$.

In the proof of \cite{MT03}*{Theorem 5.8}, the setting was more general, and in particular, it does not always hold that the spectrum $\sigma(\Delta)$ is fully invariant under the corresponding spectral decimation function $R_\Delta$. To prove $\mathcal{J}(R_\Delta) \subset \sigma(\Delta)$, the authors used the fact that $0 \in \mathcal{J}(R_\Delta) \cap \sigma(\Delta)$, as well as the fact that $0$ is not an isolated eigenvalue of $\Delta$. The proof of the other inclusion, $\sigma(\Delta) \subset \mathcal{J}(R_\Delta) \cup \mathcal{D}_\infty$ (where $\mathcal{D}_\infty$ is defined therein), follows from a standard argument in complex dynamics. 
\end{remark*}

\subsection{The main proof} 

We now have all the ingredients to prove Theorem \ref{thm:scspec}.

\begin{proof}[Proof of Theorem \ref{thm:scspec} for $\Delta_p$ on $\ell^2(\mathbb{Z}_+)$]
First of all, Proposition \ref{prop:Julia} and Theorem \ref{thm:specJulia} together imply that when $p\neq\frac{1}{2}$, $\Delta_p$ has no absolutely continuous spectrum. So we turn to the point spectrum of $\Delta_p$. Theorem \ref{thm:specJulia} says that it suffices to consider points in $\mathcal{J}(R)$.

 Let us assume that a formal eigenfunction $f_z$ of $\Delta_p$ with eigenvalue $z$ exists, that is, $\Delta_p f_z=z f_z$. We see that
\begin{align}\label{f_z}
f_z(1) = (1-z) f_z(0),
\end{align}
 by solving the eigenvalue equation at the origin; and if $f_z(0)=0$, then $f_z \equiv 0$, by solving the eigenvalue equation iteratively along $\mathbb{Z}_+$. So it is enough to consider the case $f_z(0) \neq 0$. Upon dividing $f_z$ by $f_z(0)$, we may set $f_z(0)=1$ without loss of generality.  
Let us first establish that none of the fixed points $\{0,1,2\}$ of $R$ is an eigenvalue of $\Delta_p$. By iterating the eigenvalue equation along $\mathbb{Z}_+$, it is easy to verify that
\begin{align*}
f_0\equiv 1; ~~
|f_1(4x)| = \left(\frac{q}{p}\right)^2 ~\text{for all}~x\in \mathbb{Z}_+ \setminus \{0\};~~\text{and}~
f_2(x) = (-1)^x~\text{for all}~x\in \mathbb{Z}_+.
\end{align*}
Therefore $f_z \notin \ell^2(\mathbb{Z}_+)$ for $z\in \{0,1,2\}$, so none of the fixed points is an eigenvalue of $\Delta_p$. By Corollary \ref{cor:sss},  any preimage of any fixed point under $R$ cannot be an eigenvalue, either.

Next, if we take $z\in \mathcal{J}(R)$ which is not a preimage of a fixed point of $R$, then by the definition and the basic properties of the Julia set, the sequence of iterates $\{R^{\circ n}(z)\}_n$ does not have a limit. From the eigenfunction statement in Proposition \ref{prop:iterate} and (\ref{f_z}), $f_z(3^n)=1-R^{\circ n}(z)$. Hence $\sum_{n=0}^\infty [f_z(3^n)]^2$ is divergent, which means that $f_z \notin \ell^2(\mathbb{Z}_+)$. This proves that $\Delta_p$ has no point spectrum.

We conclude that $\sigma(\Delta_p)$ has purely singularly continuous spectrum. The rest of Theorem \ref{thm:scspec} follows from Proposition \ref{prop:Julia} and Theorem \ref{thm:specJulia}.
\end{proof}

\begin{proof}[Proof of Theorem \ref{thm:scspec} for $\Delta_p$ on $L^2(\mathbb{Z}_+,\pi)$]
All the preceding arguments still hold, except that we need to check that none of the formal eigenfunctions is in $L^2(\mathbb{Z}_+,\pi)$.
By the self-similarity of the invariant measure $\pi$, it is direct to verify that $\pi(3^n)$ are identical for all $n\in \mathbb{Z}_+$. Upon replacing $\sum_{n=0}^\infty |f_z(3^n)]^2$ in the previous proof by $\sum_{n=0}^\infty [f_z(3^n)]^2 \pi(3^n)$, we see that the lack of square summability of eigenfunctions in $\ell^2(\mathbb{Z}_+)$ also holds true in $L^2(\mathbb{Z}_+, \pi)$.
\end{proof}

\begin{remark*}
As a consequence of the proof, neither of the exceptional points $1\pm p$ is an eigenvalue of $\Delta_p$. This distinguishes the $pq$-model on $\mathbb{Z}_+$ from most of the other models which admit spectral decimation (see \cites{eigenpaper,eigen2,Shima96,Ketal}), such as the infinite Sierpinski gasket \cite{t}, where there are exceptional points which are eigenvalues of the corresponding Laplacian.
\end{remark*}

\appendix

\section{Proof of Proposition \ref{prop:specdec}} \label{appA}

Let us divide $\mathbb{Z}_+$ into two disjoint subspaces $3\mathbb{Z}_+$ and $\mathbb{Z}_+\setminus 3\mathbb{Z}_+$. Then for $z\in \mathbb{C}$, the operator $\Delta_p-z$ acting on functions on $\mathbb{Z}_+$ can be represented in block matrix form
\begin{align}
\Delta_p-z = \begin{pmatrix}
I_0-z & \bar{X} \\ X & Q-z \end{pmatrix},
\end{align}
where
\begin{align*}
 I_0-z&: \{\text{functions on}~3\mathbb{Z}_+\} \to \{\text{functions on}~3\mathbb{Z}_+\},\\
\bar{X} &: \{\text{functions on}~\mathbb{Z}_+\setminus 3\mathbb{Z}_+\} \to \{\text{functions on}~3\mathbb{Z}_+\},\\
X &: \{\text{functions on}~ 3\mathbb{Z}_+\} \to \{\text{functions on}~\mathbb{Z}_+\setminus 3\mathbb{Z}_+\},\\
Q-z &: \{\text{functions on}~\mathbb{Z}_+\setminus 3\mathbb{Z}_+\} \to \{\text{functions on}~\mathbb{Z}_+\setminus 3\mathbb{Z}_+\}.\\
\end{align*}
The Schur complement $S(z)$ of $\Delta_p-z$ with respect to the block corresponding to functions on $\mathbb{Z}_+\setminus 3\mathbb{Z}_+$ is then given by
\begin{align}
S(z)=
( 
I_0-z) - \bar{X}(Q-z)^{-1} X,
\end{align}
We claim that this equals $\varphi_0(z)(\Delta_p^+ - R(z))$ as an operator acting on functions on $3\mathbb{Z}_+$.
More formally, we consider the matrices of operators
with respect to the natural basis of delta functions on
$\mathbb{Z}_+$.

To compute $S(z)$, let us observe that $I_0-z$ is a diagonal matrix with all diagonal elements equal to $1-z$; $\bar{X}$ has nonzero matrix elements $\bar{X}(0,1)=-1$, $\bar{X}(3x,3x-1) = -q$ (resp. $-p$) and $\bar{X}(3x,3x+1)=-p$ (resp. $-q$) if $3^{-m(3x)} (3x)\equiv 1 \pmod 3$ (resp. if $3^{-m(3x)} (3x)\equiv 2 \pmod 3$); $X$ has nonzero matrix elements $X(3x,3x\pm 1)=-q$ for all $x\in \mathbb{Z}_+$; and $Q-z$ is a block diagonal matrix consisting of $2\times 2$ blocks
\begin{align*}
\begin{pmatrix} (Q-z)(3x+1, 3x+1) & (Q-z)(3x+1, 3x+2) \\ (Q-z)(3x+2,3x+1) & (Q-z)(3x+2,3x+2)\end{pmatrix} = \begin{pmatrix} 1-z & -p \\ -p & 1-z\end{pmatrix},~ x\in \mathbb{Z}_+.
\end{align*}
Since $Q-z$ is block diagonal, it is easy to see that it has an inverse $(Q-z)^{-1}$ whenever $z\notin \{1-p, 1+p\}$. $(Q-z)^{-1}$ is a block diagonal matrix consisting of $2\times 2$ blocks
\begin{align*}
\begin{pmatrix} (Q-z)^{-1}(3x+1, 3x+1) & (Q-z)^{-1}(3x+1, 3x+2) \\ (Q-z)^{-1}(3x+2,3x+1) & (Q-z)^{-1}(3x+2,3x+2)\end{pmatrix} = \frac{1}{(1-z)^2-p^2}\begin{pmatrix} 1-z & p \\ p & 1-z\end{pmatrix} 
.
\end{align*}
After some algebra, we verify that $\bar{X}(Q-z)^{-1}X$ has all diagonal elements equal to $\frac{q(1-z)}{(1-z)^2-p^2}$, and off-diagonal elements
\begin{align*}
(\bar{X}(Q-z)^{-1}X)(3x,3y) = -\frac{pq}{(1-z)^2-p^2} \Delta_p(x,y),~x,y\in\mathbb{Z}_+, ~x\neq y.
\end{align*}
Therefore $(I_0-z)-\bar{X}(Q-z)^{-1} X$ has all diagonal elements equal to $\varphi_0(z)[1-R(z)]$, and off-diagonal elements $\varphi_0(z) \Delta_p^+(x,y)$ in the $(3x,3y)$-entry. This proves the claim. Since $\Delta_p -z$ is invertible if and only if both $Q-z$ and the Schur complement $(I_0-z)-\bar{X}(Q-z)^{-1}X$ are invertible, the claim implies Proposition \ref{prop:specdec}.

\begin{bibdiv}
\begin{biblist}

\bib{A}{article}{
   author={Akkermans, Eric},
   title={Statistical mechanics and quantum fields on fractals},
   conference={
      title={Fractal geometry and dynamical systems in pure and applied
      mathematics. II. Fractals in applied mathematics},
   },
   book={
      series={Contemp. Math.},
      volume={601},
      publisher={Amer. Math. Soc., Providence, RI},
   },
   date={2013},
   pages={1--21},
   review={\MR{3203824}},
   doi={10.1090/conm/601/11962},
}

\bib{adt}{article}{
	title={Thermodynamics of photons on fractals},
	author={Akkermans, Eric},
	author={Dunne, Gerald}
	author={Teplyaev, Alexander},
	journal={Physical review letters},
	volume={105},
	number={23},
	pages={230407},
	year={2010},
	publisher={APS}
}

\bib{a2}{article}{
	title={Spontaneous emission from a fractal vacuum},
	author={Akkermans, Eric},
	author={Gurevich, Evgeni},
	journal={EPL (Europhysics Letters)},
	volume={103},
	number={3},
	pages={30009},
	year={2013},
	publisher={IOP Publishing}
}

\bib{AO}{article}{
	title={Density of states on fractals: fractons},
	author={Alexander, Shlomo},
	author={Orbach, Raymond},
	journal={Journal de Physique Lettres},
	volume={43},
	number={17},
	pages={625--631},
	year={1982},
	publisher={Les Editions de Physique}
}

\bib{l}{article}{
  title={Reconstructing the universe},
  author={Ambj{\o}rn, Jan },
  author={ Jurkiewicz, Jerzy },
  author={ Loll, Renate},
  journal={Physical Review D},
  volume={72},
  number={6},
  pages={064014},
  year={2005},
  publisher={APS}
}

\bib{ABCMT}{article}{
author={Andrews, Ulysses},
author={Bonik, Grigory},
author={Chen, Joe P.},
author={Martin, Richard W.},
author={Teplyaev, Alexander},
title={Wave equation on one-dimensional fractals with spectral decimation and the complex dynamics of polynomials},
journal={arXiv e-prints},
eprint={1505.05855},
year={2015},
}

\bib{AvilaDamanik}{article}{
   author={Avila, Artur},
   author={Damanik, David},
   title={Generic singular spectrum for ergodic Schr\"odinger operators},
   journal={Duke Math. J.},
   volume={130},
   date={2005},
   number={2},
   pages={393--400},
   issn={0012-7094},
   review={\MR{2181094 (2006k:82083)}},
   doi={10.1215/S0012-7094-05-13035-6},
}

\bib{eigenpaper}{article}{
   author={Bajorin, N.},
   author={Chen, T.},
   author={Dagan, A.},
   author={Emmons, C.},
   author={Hussein, M.},
   author={Khalil, M.},
   author={Mody, P.},
   author={Steinhurst, B.},
   author={Teplyaev, A.},
   title={Vibration modes of $3n$-gaskets and other fractals},
   journal={J. Phys. A},
   volume={41},
   date={2008},
   number={1},
   pages={015101, 21},
   issn={1751-8113},
   review={\MR{2450694 (2010a:28008)}},
   doi={10.1088/1751-8113/41/1/015101},
}

\bib{eigen2}{article}{
	author={Bajorin, N.},
	author={Chen, T.},
	author={Dagan, A.},
	author={Emmons, C.},
	author={Hussein, M.},
	author={Khalil, M.},
	author={Mody, P.},
	author={Steinhurst, B.},
	author={Teplyaev, A.},
	title={Vibration spectra of finitely ramified, symmetric fractals},
	journal={Fractals},
	volume={16},
	date={2008},
	number={3},
	pages={243--258},
	issn={0218-348X},
	review={\MR{2451619}},
	doi={10.1142/S0218348X08004010},
}

\bib{BGH1}{article}{
	author={Barnsley, M. F.},
	author={Geronimo, J. S.},
	author={Harrington, A. N.},
	title={Almost periodic Jacobi matrices associated with Julia sets for
		polynomials},
	journal={Comm. Math. Phys.},
	volume={99},
	date={1985},
	number={3},
	pages={303--317},
	issn={0010-3616},
	review={\MR{795106}},
}

\bib{BGH2}{article}{
	author={Barnsley, M. F.},
	author={Geronimo, J. S.},
	author={Harrington, A. N.},
	title={Condensed Julia sets, with an application to a fractal lattice
		model Hamiltonian},
	journal={Trans. Amer. Math. Soc.},
	volume={288},
	date={1985},
	number={2},
	pages={537--561},
	issn={0002-9947},
	review={\MR{776392}},
	doi={10.2307/1999952},
}

\bib{b1}{article}{
   author={Bellissard, J.},
   title={Renormalization group analysis and quasicrystals},
   conference={
      title={Ideas and methods in quantum and statistical physics},
      address={Oslo},
      date={1988},
   },
   book={
      publisher={Cambridge Univ. Press, Cambridge},
   },
   date={1992},
   pages={118--148},
   review={\MR{1190523 (93k:81045)}},
}

\bib{BIST}{article}{
   author={Bellissard, J.},
   author={Iochum, B.},
   author={Scoppola, E.},
   author={Testard, D.},
   title={Spectral properties of one-dimensional quasi-crystals},
   journal={Comm. Math. Phys.},
   volume={125},
   date={1989},
   number={3},
   pages={527--543},
   issn={0010-3616},
   review={\MR{1022526 (90m:82043)}},
}

\bib{BGM1}{article}{
	author={Bessis, D.},
	author={Geronimo, J. S.},
	author={Moussa, P.},
	title={Function weighted measures and orthogonal polynomials on Julia
		sets},
	journal={Constr. Approx.},
	volume={4},
	date={1988},
	number={2},
	pages={157--173},
	issn={0176-4276},
	review={\MR{932652}},
	doi={10.1007/BF02075456},
}

\bib{BGM2}{article}{
	author={Bessis, D.},
	author={Geronimo, J. S.},
	author={Moussa, P.},
	title={Mellin transforms associated with Julia sets and physical
		applications},
	journal={J. Statist. Phys.},
	volume={34},
	date={1984},
	number={1-2},
	pages={75--110},
	issn={0022-4715},
	review={\MR{739123}},
	doi={10.1007/BF01770350},
}

\bib{CNT14}{article}{
   author = {Chan, J. Fun-Choi},
   author = {Ngai, S.-M.},
   author = {Teplyaev, A.},
    title = {One-dimensional wave equations defined by fractal Laplacians},
  journal = {Journal d'Analyse Mathematique},
  volume={127},
  	pages={219--246},
archivePrefix = {arXiv},
   eprint = {1406.0207},
     year = {2015},
}

\bib{CMT}{article}{
	author={Chen, Joe P.},
	author={Molchanov, Stanislav},
	author={Teplyaev, Alexander},
	title={Spectral dimension and Bohr's formula for Schr\"odinger operators
		on unbounded fractal spaces},
	journal={J. Phys. A},
	volume={48},
	date={2015},
	number={39},
	pages={395203, 27},
	issn={1751-8113},
	review={\MR{3400921}},
	doi={10.1088/1751-8113/48/39/395203},
}

\bib{d0}{article}{
   author={Damanik, David},
     title={Schr\"odinger Operators with Dynamically Defined Potentials: A Survey},
   journal={arXiv:1410.2445},
   date={2014},
   }

\bib{dl1}{article}{
   author={Damanik, David},
   author={Lenz, Daniel},
   title={Uniform spectral properties of one-dimensional quasicrystals. II.
   The Lyapunov exponent},
   journal={Lett. Math. Phys.},
   volume={50},
   date={1999},
   number={4},
   pages={245--257},
   issn={0377-9017},
   review={\MR{1768702 (2001g:81060)}},
   doi={10.1023/A:1007614218486},
}
\bib{dl2}{article}{
   author={Damanik, David},
   author={Lenz, Daniel},
   title={Uniform spectral properties of one-dimensional quasicrystals. I.
   Absence of eigenvalues},
   journal={Comm. Math. Phys.},
   volume={207},
   date={1999},
   number={3},
   pages={687--696},
   issn={0010-3616},
   review={\MR{1727235 (2000k:81078)}},
   doi={10.1007/s002200050742},
}
\bib{d2}{article}{
   author={Damanik, David},
   author={Embree, Mark},
   author={Gorodetski, Anton},
   author={Tcheremchantsev, Serguei},
   title={The fractal dimension of the spectrum of the Fibonacci
   Hamiltonian},
   journal={Comm. Math. Phys.},
   volume={280},
   date={2008},
   number={2},
   pages={499--516},
   issn={0010-3616},
   review={\MR{2395480 (2009f:81064)}},
   doi={10.1007/s00220-008-0451-3},
}

\bib{d1}{article}{
   author={Damanik, David},
   author={Gorodetski, Anton},
   author={Solomyak, Boris},
   title={Absolutely continuous convolutions of singular measures and an
   application to the square Fibonacci Hamiltonian},
   journal={Duke Math. J.},
   volume={164},
   date={2015},
   number={8},
   pages={1603--1640},
   issn={0012-7094},
   review={\MR{3352042}},
   doi={10.1215/00127094-3119739},
}
\bib{d2-}{article}{
  author={Damanik, David},
  author={Gorodetski, Anton},
  author={Liu, Qing-Hui},
  author={Qu, Yan-Hui},
  title={Transport exponents of Sturmian Hamiltonians},
  journal={J. Funct. Anal.},
  volume={269},
  date={2015},
  number={5},
  pages={1404--1440},
  issn={0022-1236},
  review={\MR{3369942}},
  doi={10.1016/j.jfa.2015.05.018},
}
\bib{d3}{article}{
  author={Damanik, David},
  author={Lukic, Milivoje},
  author={Yessen, William},
  title={Quantum dynamics of periodic and limit-periodic Jacobi and block
  Jacobi matrices with applications to some quantum many body problems},
  journal={Comm. Math. Phys.},
  volume={337},
  date={2015},
  number={3},
  pages={1535--1561},
  issn={0010-3616},
  review={\MR{3339185}},
  doi={10.1007/s00220-015-2346-4},
}
\bib{d4}{article}{
  author={Damanik, David},
  author={Fillman, Jake},
  author={Vance, Robert},
  title={Dynamics of unitary operators},
  journal={J. Fractal Geom.},
  volume={1},
  date={2014},
  number={4},
  pages={391--425},
  issn={2308-1309},
  review={\MR{3299818}},
  doi={10.4171/JFG/12},
}
\bib{d5}{article}{
  author={Damanik, David},
  author={Fillman, Jake},
  author={Gorodetski, Anton},
  title={Continuum Schr\"odinger operators associated with aperiodic
  subshifts},
  journal={Ann. Henri Poincar\'e},
  volume={15},
  date={2014},
  number={6},
  pages={1123--1144},
  issn={1424-0637},
  review={\MR{3205747}},
  doi={10.1007/s00023-013-0264-6},
}
\bib{DGV12}{article}{
   author={Derfel, Gregory},
   author={Grabner, Peter J.},
   author={Vogl, Fritz},
   title={Laplace operators on fractals and related functional equations},
   journal={J. Phys. A},
   volume={45},
   date={2012},
   number={46},
   pages={463001, 34},
   issn={1751-8113},
   review={\MR{2993415}},
   doi={10.1088/1751-8113/45/46/463001},
}


\bib{Dunne}{article}{
   author={Dunne, Gerald V.},
   title={Heat kernels and zeta functions on fractals},
   journal={J. Phys. A},
   volume={45},
   date={2012},
   number={37},
   pages={374016, 22},
   issn={1751-8113},
   review={\MR{2970533}},
   doi={10.1088/1751-8113/45/37/374016},
}

\bib{DurrettEOSP}{book}{
   author={Durrett, Richard},
   title={Essentials of stochastic processes},
   series={Springer Texts in Statistics},
   edition={2},
   publisher={Springer, New York},
   date={2012},
   pages={x+265},
   isbn={978-1-4614-3614-0},
   review={\MR{2933766}},
   doi={10.1007/978-1-4614-3615-7},
}

\bib{e}{article}{
  title={Metric space-time as fixed point of the renormalization group equations on fractal structures},
  author={Englert, Fran{\c{c}}ois },
  author={ Fr{\`e}re, J-M },
  author={ Rooman, Marianne},
  author={ Spindel, Ph},
  journal={Nuclear Physics B},
  volume={280},
  pages={147--180},
  year={1987},
  publisher={Elsevier}
}

\bib{FKS}{article}{
	author={Fan, Edward},
	author={Khandker, Zuhair},
	author={Strichartz, Robert S.},
	title={Harmonic oscillators on infinite Sierpinski gaskets},
	journal={Comm. Math. Phys.},
	volume={287},
	date={2009},
	number={1},
	pages={351--382},
	issn={0010-3616},
	review={\MR{2480752}},
	doi={10.1007/s00220-008-0633-z},
}

\bib{FST}{article}{
	author={Fontaine, Daniel},
	author={Smith, Thomas},
	author={Teplyaev, Alexander},
	title={Resistance of random Sierpi\'nski gaskets},
	conference={
		title={Quantum graphs and their applications},
	},
	book={
		series={Contemp. Math.},
		volume={415},
		publisher={Amer. Math. Soc., Providence, RI},
	},
	date={2006},
	pages={121--136},
	review={\MR{2277612}},
	doi={10.1090/conm/415/07864},
}

\bib{FukushimaShima}{article}{
   author={Fukushima, M.},
   author={Shima, T.},
   title={On a spectral analysis for the Sierpi\'nski gasket},
   journal={Potential Anal.},
   volume={1},
   date={1992},
   number={1},
   pages={1--35},
   issn={0926-2601},
   review={\MR{1245223 (95b:31009)}},
   doi={10.1007/BF00249784},
}

\bib{b2}{article}{
  title={Band spectrum for an electron on a Sierpinski gasket in a magnetic field},
  author={Ghez, JM},
  author={Wang, Yin Yu},
  author={Rammal, R},
  author={Pannetier, B},
  author={Bellissard, J},
  journal={Solid state communications},
  volume={64},
  number={10},
  pages={1291--1294},
  year={1987},
  publisher={Elsevier}
}

\bib{GLN1}{article}{
	author = {Grigorchuk, R.},
	author = {Lenz, D.},
	author = {Nagnibeda, T.},
	title = {Spectra of Schreier graphs of Grigorchuk's group and Schroedinger operators with aperiodic order},
	journal = {},
	archivePrefix = {arXiv},
	eprint = {1412.6822},
	year = {2016},
}
\bib{GLN}{article}{
	author = {Grigorchuk, R.},
	author = {Lenz, D.},
	author = {Nagnibeda, T.},
	title = {Schreier graphs of Grigorchuk's group and a subshift associated to a non-primitive substitution},
	journal = {To appear in: Groups, Graphs, and Random Walks. T.Ceccherini-Silberstein, M.Salvatori and E.Sava-Huss Eds, London Mathematical Society Lecture Note Series, Cambridge University Press, Cambridge, 2016},
	archivePrefix = {arXiv},
	   eprint = {1510.00545},
	year = {2016},
}

\bib{HSTZ12}{article}{
   author={Hare, Kathryn E.},
   author={Steinhurst, Benjamin A.},
   author={Teplyaev, Alexander},
   author={Zhou, Denglin},
   title={Disconnected Julia sets and gaps in the spectrum of Laplacians on
   symmetric finitely ramified fractals},
   journal={Math. Res. Lett.},
   volume={19},
   date={2012},
   number={3},
   pages={537--553},
   issn={1073-2780},
   review={\MR{2998138}},
   doi={10.4310/MRL.2012.v19.n3.a3},
}

\bib{HKT}{article}{
	author={Hinz, Michael},
	author={Kelleher, Daniel J.},
	author={Teplyaev, Alexander},
	title={Metrics and spectral triples for Dirichlet and resistance forms},
	journal={J. Noncommut. Geom.},
	volume={9},
	date={2015},
	number={2},
	pages={359--390},
	issn={1661-6952},
	review={\MR{3359015}},
	doi={10.4171/JNCG/195},
}

\bib{i1}{article}{
	author={Ionescu, Marius},
	author={Rogers, Luke G.},
	title={Complex powers of the Laplacian on affine nested fractals as
		Calder\'on-Zygmund operators},
	journal={Commun. Pure Appl. Anal.},
	volume={13},
	date={2014},
	number={6},
	pages={2155--2175},
	issn={1534-0392},
	review={\MR{3248383}},
	doi={10.3934/cpaa.2014.13.2155},
}

\bib{i2}{article}{
	author={Ionescu, Marius},
	author={Rogers, Luke G.},
	author={Strichartz, Robert S.},
	title={Pseudo-differential operators on fractals and other metric measure
		spaces},
	journal={Rev. Mat. Iberoam.},
	volume={29},
	date={2013},
	number={4},
	pages={1159--1190},
	issn={0213-2230},
	review={\MR{3148599}},
	doi={10.4171/RMI/752},
}

\bib{i3}{article}{
	author={Ionescu, Marius},
	author={Pearse, Erin P. J.},
	author={Rogers, Luke G.},
	author={Ruan, Huo-Jun},
	author={Strichartz, Robert S.},
	title={The resolvent kernel for PCF self-similar fractals},
	journal={Trans. Amer. Math. Soc.},
	volume={362},
	date={2010},
	number={8},
	pages={4451--4479},
	issn={0002-9947},
	review={\MR{2608413}},
	doi={10.1090/S0002-9947-10-05098-1},
}

\bib{JL99}{article}{
   author={Jitomirskaya, Svetlana},
   author={Last, Yoram},
   title={Power-law subordinacy and singular spectra. I. Half-line
   operators},
   journal={Acta Math.},
   volume={183},
   date={1999},
   number={2},
   pages={171--189},
   issn={0001-5962},
   review={\MR{1738043 (2001a:47033)}},
   doi={10.1007/BF02392827},
}

\bib{Ketal}{article}{
	author={Kelleher, D.},
	author={Gupta, N.},
	author={Margenot, M.},
	author={Marsh, J.},
	author={Oakley, W.},
	author={Teplyaev, A.},
	title={Gaps in the spectrum of the Laplacian on $3N$-gaskets},
	journal={Commun. Pure Appl. Anal.},
	volume={14},
	date={2015},
	number={6},
	pages={2509--2533},
	issn={1534-0392},
	review={\MR{3411119}},
	doi={10.3934/cpaa.2015.14.2509},
}

\bib{KL}{article}{
	author={Kigami, Jun},
	author={Lapidus, Michel L.},
	title={Weyl's problem for the spectral distribution of Laplacians on
		p.c.f.\ self-similar fractals},
	journal={Comm. Math. Phys.},
	volume={158},
	date={1993},
	number={1},
	pages={93--125},
	issn={0010-3616},
	review={\MR{1243717}},
}

\bib{Lbook}{book}{
	author={Lapidus, Michel L.},
	author={van Frankenhuijsen, Machiel},
	title={Fractal geometry, complex dimensions and zeta functions},
	series={Springer Monographs in Mathematics},
	edition={2},
	note={Geometry and spectra of fractal strings},
	publisher={Springer, New York},
	date={2013},
	pages={xxvi+567},
	isbn={978-1-4614-2175-7},
	isbn={978-1-4614-2176-4},
	review={\MR{2977849}},
	doi={10.1007/978-1-4614-2176-4},
}

\bib{LBFA}{article}{
   author = {Levy, E.},
   author = {Barak, A.},
   author = {Fisher, A.},
   author = {Akkermans, E.},
    title = {Topological properties of Fibonacci quasicrystals : A scattering analysis of Chern
    numbers},
  journal = {arXiv:1509.04028},
     year = {2015},
}

\bib{MT03}{article}{
   author={Malozemov, Leonid},
   author={Teplyaev, Alexander},
   title={Self-similarity, operators and dynamics},
   journal={Math. Phys. Anal. Geom.},
   volume={6},
   date={2003},
   number={3},
   pages={201--218},
   issn={1385-0172},
   review={\MR{1997913 (2004d:47012)}},
   doi={10.1023/A:1024931603110},
}

\bib{Mei}{article}{
   author={Mei, May},
   title={Spectra of discrete Schr\"odinger operators with primitive
   invertible substitution potentials},
   journal={J. Math. Phys.},
   volume={55},
   date={2014},
   number={8},
   pages={082701, 22},
   issn={0022-2488},
   review={\MR{3390728}},
   doi={10.1063/1.4886535},
}

\bib{MST}{article}{
	author={Meyers, Robert},
	author={Strichartz, Robert S.},
	author={Teplyaev, Alexander},
	title={Dirichlet forms on the Sierpi\'nski gasket},
	journal={Pacific J. Math.},
	volume={217},
	date={2004},
	number={1},
	pages={149--174},
	issn={0030-8730},
	review={\MR{2105771}},
	doi={10.2140/pjm.2004.217.149},
}

\bib{Milnor}{book}{
   author={Milnor, John},
   title={Dynamics in one complex variable},
   series={Annals of Mathematics Studies},
   volume={160},
   edition={3},
   publisher={Princeton University Press, Princeton, NJ},
   date={2006},
   pages={viii+304},
   isbn={978-0-691-12488-9},
   isbn={0-691-12488-4},
   review={\MR{2193309 (2006g:37070)}},
}

\bib{OST}{article}{
	author={Okoudjou, Kasso A.},
	author={Saloff-Coste, Laurent},
	author={Teplyaev, Alexander},
	title={Weak uncertainty principle for fractals, graphs and metric measure
		spaces},
	journal={Trans. Amer. Math. Soc.},
	volume={360},
	date={2008},
	number={7},
	pages={3857--3873},
	issn={0002-9947},
	review={\MR{2386249}},
	doi={10.1090/S0002-9947-08-04472-3},
}

\bib{Quint}{article}{
   author={Quint, J.-F.},
   title={Harmonic analysis on the Pascal graph},
   journal={J. Funct. Anal.},
   volume={256},
   date={2009},
   number={10},
   pages={3409--3460},
   issn={0022-1236},
   review={\MR{2504530}},
   doi={10.1016/j.jfa.2009.01.011},
}

\bib{RammalToulouse}{article}{
author={Rammal, R.},
author={Toulouse, G.},
title={Random walks on fractal structures and percolation clusters},
journal={Journal de Physique Letters},
volume={44},
date={1983},
number={1},
pages={13--22},
doi={10.1051/jphyslet:0198300440101300},
}

\bib{r}{article}{
  title={Fractal space-times under the microscope: a renormalization group view on Monte Carlo data},
  author={Reuter, Martin}, author={ Saueressig, Frank},
  journal={Journal of High Energy Physics},
  volume={2011},
  number={12},
  pages={1--27},
  year={2011},
  publisher={Springer}
}

\bib{r1}{article}{
	author={Rogers, Luke G.},
	title={Estimates for the resolvent kernel of the Laplacian on p.c.f.
		self-similar fractals and blowups},
	journal={Trans. Amer. Math. Soc.},
	volume={364},
	date={2012},
	number={3},
	pages={1633--1685},
	issn={0002-9947},
	review={\MR{2869187}},
	doi={10.1090/S0002-9947-2011-05551-0},
}

\bib{r2}{article}{
	author={Rogers, Luke G.},
	author={Strichartz, Robert S.},
	title={Distribution theory on P.C.F. fractals},
	journal={J. Anal. Math.},
	volume={112},
	date={2010},
	pages={137--191},
	issn={0021-7670},
	review={\MR{2762999}},
	doi={10.1007/s11854-010-0027-y},
}

\bib{Shima91}{article}{
   author={Shima, Tadashi},
   title={On eigenvalue problems for the random walks on the Sierpi\'nski
   pre-gaskets},
   journal={Japan J. Indust. Appl. Math.},
   volume={8},
   date={1991},
   number={1},
   pages={127--141},
   issn={0916-7005},
   review={\MR{1093832 (92g:60094)}},
   doi={10.1007/BF03167188},
}

\bib{Shima96}{article}{
   author={Shima, Tadashi},
   title={On eigenvalue problems for Laplacians on p.c.f. self-similar sets},
   journal={Japan J. Indust. Appl. Math.},
   volume={13},
   date={1996},
   number={1},
   pages={1--23},
   issn={0916-7005},
   review={\MR{1377456 (97f:28028)}},
   doi={10.1007/BF03167295},
}

\bib{Simon95}{article}{
   author={Simon, Barry},
   title={Singular spectrum: recent results and open questions},
   conference={
      title={XIth International Congress of Mathematical Physics},
      address={Paris},
      date={1994},
   },
   book={
      publisher={Int. Press, Cambridge, MA},
   },
   date={1995},
   pages={507--512},
   review={\MR{1370708 (96k:81038)}},
}

\bib{StrichartzFSFs}{article}{
	author={Strichartz, Robert S.},
	title={Function spaces on fractals},
	journal={J. Funct. Anal.},
	volume={198},
	date={2003},
	number={1},
	pages={43--83},
	issn={0022-1236},
	review={\MR{1962353}},
	doi={10.1016/S0022-1236(02)00035-6},
}

\bib{StrichartzFourierGaps}{article}{
   author={Strichartz, Robert S.},
   title={Laplacians on fractals with spectral gaps have nicer Fourier
   series},
   journal={Math. Res. Lett.},
   volume={12},
   date={2005},
   number={2-3},
   pages={269--274},
   issn={1073-2780},
   review={\MR{2150883 (2006e:28013)}},
   doi={10.4310/MRL.2005.v12.n2.a12},
}

\bib{S1}{article}{
   author={Strichartz, Robert S.},
   title={A fractal quantum mechanical model with Coulomb potential},
   journal={Commun. Pure Appl. Anal.},
   volume={8},
   date={2009},
   number={2},
   pages={743--755},
   issn={1534-0392},
   review={\MR{2461574 (2010c:81086)}},
   doi={10.3934/cpaa.2009.8.743},
}

\bib{st}{article}{
   author={Strichartz, Robert S.},
   author={Teplyaev, Alexander},
   title={Spectral analysis on infinite Sierpi\'nski fractafolds},
   journal={J. Anal. Math.},
   volume={116},
   date={2012},
   pages={255--297},
   issn={0021-7670},
   review={\MR{2892621}},
   doi={10.1007/s11854-012-0007-5},
}

\bib{a1}{article}{
  title={Fractal energy spectrum of a polariton gas in a Fibonacci quasiperiodic potential},
  author={Tanese, Dimitrii},
  author={Gurevich, Evgeni},
  author={Baboux, Florent},
  author={Jacqmin, Thibautv},
  author={Lema{\^\i}tre, Aristide},
  author={Galopin, Elisabeth},
  author={Sagnes, Isabelle},
  author={Amo, Alberto},
  author={Bloch, Jacqueline},
  author={Akkermans, Eric},
  journal={Physical review letters},
  volume={112},
  number={14},
  pages={146404},
  year={2014},
  publisher={APS}
}

\bib{t}{article}{
   author={Teplyaev, Alexander},
   title={Spectral analysis on infinite Sierpi\'nski gaskets},
   journal={J. Funct. Anal.},
   volume={159},
   date={1998},
   number={2},
   pages={537--567},
   issn={0022-1236},
   review={\MR{1658094 (99j:35153)}},
   doi={10.1006/jfan.1998.3297},
}

\bib{TeplyaevSpectralZeta}{article}{
   author={Teplyaev, Alexander},
   title={Spectral zeta functions of fractals and the complex dynamics of
   polynomials},
   journal={Trans. Amer. Math. Soc.},
   volume={359},
   date={2007},
   number={9},
   pages={4339--4358 (electronic)},
   issn={0002-9947},
   review={\MR{2309188 (2008j:11119)}},
   doi={10.1090/S0002-9947-07-04150-5},
}

\end{biblist}
\end{bibdiv}

\end{document}